\documentclass[12pt]{article}

\usepackage{lipsum, layouts, kantlipsum}
\usepackage{amsfonts}
\usepackage{amsmath}
\usepackage{amssymb}
\usepackage{amsthm}
\usepackage{bm}
\usepackage{bbm}
\usepackage{booktabs}
\usepackage[font=bf,justification=centering]{caption}
\usepackage{enumerate}
\usepackage{float}
\usepackage[top=1in, bottom=1in, left=1in, right=1in]{geometry}
\usepackage{graphicx}
\usepackage[hidelinks,bookmarks]{hyperref}
\usepackage{natbib}
\usepackage{setspace}
\usepackage{tabularx}
\usepackage{titling}

\usepackage{bibunits}

\hypersetup{pdfstartview={XYZ null null 1.00}} 
\allowdisplaybreaks
\theoremstyle{definition}

\theoremstyle{definition}

\theoremstyle{plain}

\theoremstyle{plain}
\newtheorem{proposition}{Proposition}[section]
\newtheorem{corollary}{Corollary}[section]

\title{\vspace{-2.0cm}Sign Restrictions and Supply-demand Decompositions of Inflation}
\author{Matthew Read\thanks{I thank Benjamin Beckers, Robin Braun, Jarkko~J\"{a}\"{a}skel\"{a}, Benjamin Wong and seminar participants at the Reserve Bank of Australia and University of Adelaide, three anonymous referees and Marco Del Negro for helpful comments. The views expressed in this paper are my own and do not necessarily reflect the views of the Reserve Bank of Australia. The author has no conflicts of interest to disclose. Correspondence: Matthew Read, Economic Research Department, Reserve Bank of Australia, GPO Box 3947, Sydney, NSW, 2001, Australia. Email address: readm@rba.gov.au.}}

\defaultbibliography{SupplyDemand.bib}        
\defaultbibliographystyle{ecta}

\begin{document}

\begin{bibunit}

\maketitle

\begin{abstract}
    Sign restrictions on the slopes of supply and demand curves are often used to identify historical decompositions in structural vector autoregressions. I show that the identifying power of these restrictions depends on both reduced-form parameters and realised forecast errors. Consequently, unlike many other structural objects, the strength of identification cannot be assessed from reduced-form parameters alone. Empirically, identified sets for historical decompositions of US inflation are typically largely uninformative, both in aggregate and in most expenditure categories. Existing inflation decompositions are therefore sensitive to auxiliary assumptions used to select among observationally equivalent models. 
\end{abstract}

\noindent\textbf{Keywords:} historical decomposition, set identification, sign restrictions, structural vector autoregression

\noindent\textbf{JEL classification:} C32, E31, E32.

\section{Introduction}
\label{sec:intro}

Distinguishing supply- from demand-driven changes in prices and quantities is important because the shocks may have different policy implications. A prominent recent example is the debate over the sources of the pandemic-era surge and subsequent decline in inflation.\footnote{US examples include \cite{Ball_Leigh_Mishra_2022}, \cite{Banbura_Bobeica_MartinezHernandez_2023}, \cite{Beaudry_Hou_Portier_2025}, \cite{Bernanke_Blanchard_2025} and \cite{Eickmeier_Hofmann_2022}.}

A common strategy for decomposing changes in variables into shock contributions is to estimate a structural vector autoregression (SVAR). When decomposing changes in prices and quantities into contributions from supply and demand shocks, one approach to identification is to impose sign restrictions on the slopes of supply and demand curves.\footnote{Following \cite{Faust_1998}, \cite{Canova_DeNicolo_2002} and \cite{Uhlig_2005}, sign restrictions are widely used in the broader SVAR literature.} Economic theory typically implies that supply curves are upward sloping and demand curves are downward sloping. Hence, the appeal of these restrictions is that they are uncontroversial. There is, however, a cost: sign restrictions set-identify shock decompositions, so a range of decompositions are observationally equivalent.

This paper examines the conditions under which these sign restrictions are informative about historical decompositions of observed price changes -- an important question given their use empirically in estimating the drivers of inflation. The paper makes two main contributions. First, I analytically characterise identified sets for historical decompositions and show how their informativeness depends on features of the data. Second, I estimate identified sets for supply-demand decompositions of US inflation, transparently showing what the sign restrictions alone reveal about the contributions of shocks.

The theoretical discussion builds on work examining the identifying power of sign restrictions in supply-demand systems. \cite{Baumeister_Hamilton_2015} use a supply-demand model to illustrate the influence of the `uniform' prior on Bayesian inference in sign-restricted SVARs, and derive identified sets for the price elasticities of supply and demand (p. 1975--1976). \cite{Uhlig_2017} illustrates the use of sign restrictions when identifying the slopes of supply and demand curves.\footnote{\cite{Leamer_1981} contains a similar discussion in the context of maximum likelihood estimation of simultaneous-equation systems subject to inequality constraints.} \cite{Inoue_Kilian_2026} illustrate how identification strength depends on reduced-form parameters, focusing on impulse responses. Like these papers, I consider a bivariate VAR in prices and quantities identified by supply-demand sign restrictions, but I focus on historical decompositions, motivated by recent empirical applications.

In the analytically tractable case where interest is in decomposing one-step-ahead forecast errors, I show that the identifying power of the sign restrictions depends on both the reduced-form correlation between price and quantity forecast errors, $\rho$, and the realisations of those forecast errors. When $\rho$ is close to zero, decompositions are necessarily uninformative: observed price changes can be explained almost entirely by either shock. When $|\rho|$ is large, a sharp decomposition is possible, but only if the realised forecast errors align with their historical relationship. Consequently, $\rho$ is not a sufficient statistic for assessing the identifying power of the restrictions with respect to historical decompositions. This distinguishes historical decompositions from objects such as impulse responses, elasticities and forecast error variance decompositions, for which identification is governed solely by reduced-form parameters (e.g. \citealt{Baumeister_Hamilton_2015,Uhlig_2017,Inoue_Kilian_2026}).

Empirically, recent papers have used bivariate models to estimate historical decompositions of inflation, exploiting sign restrictions on the slopes of supply and demand curves. A prominent example is \cite{Giannone_Primiceri_2024_nber}, who estimate the contributions of supply and demand shocks to inflation in the United States and euro area, finding that post-pandemic inflation was predominantly driven by demand shocks.\footnote{Other examples include \cite{DellaChang_Jansen_Pagliacci_2023}, \cite{Liepnieks_Staehr_Tkacevs_2026} and \cite{Bergholt_Furlanetto_Vaccaro-Grange_2024}.} A feature of these papers is that they work with a single decomposition chosen from a set of observationally equivalent decompositions. It is therefore unclear to what extent results are driven by the auxiliary assumptions implicit in selecting a single decomposition. I therefore revisit the model of \cite{Giannone_Primiceri_2024_nber}, directly estimating identified sets for the contributions of aggregate supply and demand shocks to post-pandemic inflation forecast errors. The resulting sets are wide: for example, at the 2022:Q2 inflation peak, the estimated supply contribution ranges from near zero to almost the entire forecast error.

Motivated by the inflation decomposition in \cite{Shapiro_2026}, which exploits disaggregated data on prices and quantities, I also estimate identified sets for historical decompositions of US inflation in different expenditure categories.\footnote{The inflation decomposition in \cite{Shapiro_2026} has informed policymakers' assessments of the economic outlook (e.g. \citealt{Lane_2023,Kugler_2024}). It has also been applied in various other settings (e.g. \citealt{Goncalves_Koester_2022,Chen_Tombe_2023,Firat_Hao_2023}).} The restrictions are largely uninformative in most expenditure categories and periods, although they sharply identify the drivers of inflation in some categories. For example, for `food produced and consumed on farms', $\rho \approx -1$, so price and quantity changes approximately trace out a short-run demand curve and inflation is predominantly attributed to supply shocks.

Sign restrictions on the slopes of supply and demand curves, on their own, may not deliver unambiguous conclusions about the contributions of supply and demand shocks to changes in prices (or quantities). This is the case when decomposing US inflation, both in aggregate and in most expenditure categories. In these cases, any additional assumptions inherent in selecting a single model or decomposition -- such as the specification of a Bayesian prior distribution -- have a strong influence on conclusions about the contributions of shocks.

\medskip

\noindent\textbf{Outline.} Section~\ref{sec:SVAR} introduces the model, Section~\ref{sec:informativeness} develops the identification results, and Section~\ref{sec:empirical} presents the empirical applications. The Online Appendix contains derivations and implementation details.

\medskip

\noindent\textbf{Notation.} Vectors and matrices are bold. For a matrix $\mathbf{X}$, $\mathrm{vec}(\mathbf{X})$ stacks its columns and $\mathrm{vech}(\mathbf{X})$ stacks its lower-triangular elements. $\mathbf{e}_{i}$ is column $i$ of the $2\times 2$ identity matrix, $\mathbf{I}_2$.

\section{SVAR and Decompositions}
\label{sec:SVAR}

This section introduces the bivariate SVAR and the historical decomposition.

\subsection{SVAR and orthogonal reduced form}

Assume that $\mathbf{y}_{t} = (p_t,q_t)'$ contains data on prices $p_t$ and quantities $q_t$, and is generated by the SVAR($p$) process:
\begin{equation}\label{eq:svar}
    \mathbf{A}_{0}\mathbf{y}_{t} = \mathbf{A}_{+}\mathbf{x}_{t} + \bm{\varepsilon}_t,
\end{equation}
where: $\mathbf{A}_{0}$ is an invertible matrix with positive diagonal elements (a sign normalisation); $\mathbf{x}_{t} = (\mathbf{y}_{t-1}',\ldots,\mathbf{y}_{t-p}',1)'$; $\mathbf{A}_{+} = (\mathbf{A}_{1},\ldots,\mathbf{A}_{p},\mathbf{a})$; and $\bm{\varepsilon}_t = (\varepsilon_{1t},\varepsilon_{2t})'$ are the structural shocks, which have zero mean and identity variance-covariance matrix.

The model's orthogonal reduced-form parameterisation is:
\begin{equation}
    \mathbf{y}_t = \mathbf{B}\mathbf{x}_{t} + \bm{\Sigma}_{tr}\mathbf{Q}\bm{\varepsilon}_t,
\end{equation}
where: $\mathbf{B} = (\mathbf{B}_{1},\ldots,\mathbf{B}_{p},\mathbf{b}) $ is a matrix of reduced-form coefficients; $\bm{\Sigma}_{tr}$ is the lower-triangular Cholesky factor of the reduced-form innovation variance-covariance matrix $\bm{\Sigma}=\mathbb{E}(\mathbf{u}_{t}\mathbf{u}_{t}')$ with $\mathbf{u}_{t} = (u_{pt},u_{qt})'= \mathbf{y}_{t} - \mathbf{B}\mathbf{x}_{t}$; and $\mathbf{Q}$ is an orthonormal matrix, with $\mathcal{O}(2)$ the set of all such matrices. Let~$\bm{\phi} = (\mathrm{vec}(\mathbf{B})',\mathrm{vech}(\bm{\Sigma}_{tr})')'$ collect the reduced-form parameters. The two parameterisations are related by $\mathbf{B} = \mathbf{A}_{0}^{-1}\mathbf{A}_{+}$, $\bm{\Sigma} = \mathbf{A}_{0}^{-1}(\mathbf{A}_{0}^{-1})'$ and $\mathbf{Q} = \bm{\Sigma}_{tr}^{-1}\mathbf{A}_{0}^{-1}$.

Absent identifying restrictions, every $\mathbf{Q} \in \mathcal{O}(2)$ is consistent with $\bm{\phi}$, so $\mathbf{Q}$ and the structural parameters are set identified (e.g. \citealt{Uhlig_2005}). Sign restrictions restrict $\mathbf{Q}$ to an identified set, inducing identified sets for structural objects like historical decompositions. 

\subsection{Historical decomposition}
\label{subsec:historicaldecomp}

The historical decomposition is the contribution of a shock to the observed unexpected change (forecast error) in a variable over some horizon (e.g. \citealt{Antolin-Diaz_Rubio-Ramirez_2018,Baumeister_Hamilton_2018}). Define the reduced-form impulse response $\mathbf{C}_h$ via the recursion $\mathbf{C}_h = \sum_{l=1}^{\mathrm{min}\{h,p\}}\mathbf{B}_{l}\mathbf{C}_{h-l}$ for $h\geq 1$ with $\mathbf{C}_0=\mathbf{I}_2$. The contribution of shock $j$ to the forecast error in variable $i$ between periods $t$ and $t+h$ is:
\begin{align}
    H_{i,j,t,t+h}  &= \mathbb{E}\left(y_{i,t+h} | \left\{\varepsilon_{j,\tau}\right\}_{t \leq \tau \leq t+h }, \left\{\mathbf{y}_{\tau}\right\}_{-\infty < \tau \leq t-1}\right) - \mathbb{E}\left(y_{i,t+h} | \left\{\mathbf{y}_{\tau}\right\}_{-\infty < \tau \leq t-1}\right) \label{eq:histdecomp} \\
    &= \sum_{l=0}^{h}\mathbf{c}_{il}'(\bm{\phi})\mathbf{q}_{j} \mathbf{q}_{j}'\bm{\Sigma}_{tr}^{-1}\mathbf{u}_{t+h-l},
\end{align}
where $\mathbf{c}_{ih}'(\bm{\phi}) = \mathbf{e}_{i}'\mathbf{C}_{h}\bm{\Sigma}_{tr}$ is row $i$ of $\mathbf{C}_{h}\bm{\Sigma}_{tr}$ and $\mathbf{q}_{j}=\mathbf{Q}\mathbf{e}_{j}$ is column $j$ of $\mathbf{Q}$. For example, $H_{1,j,t,t}$ represents the contribution of shock $j$ to the one-step-ahead forecast error in $p_t$. 

An alternative definition is the contribution of \emph{all} past shock realisations to $y_{it}$ (\citealt{Kilian_Lutkepohl_2017,Plagborg-Moller_Wolf_2022,Bergholt_etal_2024}). In terms of (\ref{eq:histdecomp}), this definition corresponds to $H_{i,j,1,t}$. The difference between $y_{it}$ and $\sum_{j} H_{i,j,1,t}$ represents the contributions of initial conditions and deterministic terms (e.g. a constant). 

It is straightforward to show that $\sum_{j}H_{i,j,t,t+h} = \sum_{l=0}^{h}\mathbf{e}_{i}'\mathbf{C}_{l}\mathbf{u}_{t+h-l}$, which is the $(h+1)$-step-ahead forecast error in variable $i$. In the bivariate model, knowing the contribution of one shock means that we also know the contribution of the other shock. In what follows, I therefore focus on the contribution of $\varepsilon_{1t}$ to the forecast error in $p_t$ as the structural object of interest, and denote this by $H_{t,t+h} \equiv H_{1,1,t,t+h}$.\footnote{The analysis focuses on scalar historical decompositions, with inference conducted marginally for each shock contribution and period. In some applications, joint inference about contributions across periods or shocks may be of interest (e.g. \citealt{Inoue_Kilian_2022}). However, the associated joint identified sets may be high-dimensional and difficult to visualise and summarise. As discussed in \citet{Giacomini_Kitagawa_Read_2022_rejoinder}, posterior lower and upper probabilities provide one way to assess joint hypotheses without characterising the full joint identified set; Online Appendix~\ref{app:jointinference} applies this approach.}

\section{Identification of Historical Decompositions}
\label{sec:informativeness}

This section characterises identified sets for the historical decomposition, and explains how the identifying power of the restrictions depends on features of the data.

\subsection{Sign restrictions and identified sets}

Let $\mathrm{vech}(\bm{\Sigma}_{tr})= (\sigma_{11},\sigma_{21},\sigma_{22})'$. The correlation between the innovations in $p_t$ and $q_t$ is
 \begin{equation}
     \rho = \frac{\sigma_{21}}{\sqrt{\sigma_{21}^{2}+\sigma_{22}^{2}}}.
 \end{equation}
The space of $2\times 2$~orthonormal rotation matrices can be represented as
\begin{equation}
    \mathcal{O}(2) = \left\{
    \begin{bmatrix}
        \cos\theta & -\sin\theta \\
        \sin\theta & \cos\theta 
    \end{bmatrix}
    : \theta \in [-\pi,\pi]\right\},
\end{equation}
so the matrix of impact impulse responses is
\begin{equation}
    \mathbf{A}_{0}^{-1} =
    \begin{bmatrix}
        \sigma_{11}\cos\theta & -\sigma_{11}\sin\theta \\
        \sigma_{21}\cos\theta + \sigma_{22}\sin\theta & \sigma_{22}\cos\theta-\sigma_{21}\sin\theta
    \end{bmatrix}
.
\end{equation}
Impose the sign restrictions:
\begin{equation}
    \mathbf{A}_{0}^{-1} =
    \begin{bmatrix}
        + & + \\
        - & +
    \end{bmatrix},
\end{equation}
which require $p_t$ and $q_t$ to move in opposite directions following a supply shock, $\varepsilon_{st} \equiv \varepsilon_{1t}$, and in the same direction following a demand shock, $\varepsilon_{dt} \equiv \varepsilon_{2t}$. These sign restrictions generate an identified set for $\theta$:\footnote{For a derivation of this identified set, see \cite{Baumeister_Hamilton_2015} or \cite{Read_2022}.}
\begin{equation}\label{eq:istheta}
IS_{\theta}(\bm{\phi}) = 
\begin{cases}
\left[\arctan\left(\frac{\sigma_{22}}{\sigma_{21}}\right),0\right] & \text{if $\rho < 0$} \\
\left[-\frac{\pi}{2},\arctan\left(-\frac{\sigma_{21}}{\sigma_{22}}\right)\right] & \text{if $\rho \geq 0$.}
\end{cases}
\end{equation}
An identical identified set is obtained by imposing the sign restrictions
\begin{equation}
    \mathbf{A}_{0} =
    \begin{bmatrix}
        + & - \\
        + & +
    \end{bmatrix},
\end{equation}
which imply that the first structural equation can be interpreted as an upward-sloping supply curve and the second as a downward-sloping demand curve. The innovation correlation $\rho$ determines the identifying power of the sign restrictions for the slopes of each curve. Letting $\omega^{s}=-(\mathbf{e}_{1}'\mathbf{A}_0\mathbf{e}_2)/(\mathbf{e}_1'\mathbf{A}_0\mathbf{e}_1)$ be the slope of the supply curve and $\omega^{d}=-(\mathbf{e}_{2}'\mathbf{A}_0\mathbf{e}_2)/(\mathbf{e}_2'\mathbf{A}_0\mathbf{e}_1)$ be the slope of the demand curve, the identified sets for the slopes of the curves are:
\begin{equation}
    IS_{\omega^s}(\bm{\phi}) = 
\begin{cases}
    [0,\infty) & \text{if $\rho \leq  0$} \\
    \left[\frac{\sigma_{11}}{\sqrt{\sigma_{21}^{2}+\sigma_{22}^{2}}} \rho, \frac{\sigma_{11}}{\sqrt{\sigma_{21}^{2}+\sigma_{22}^{2}}} \frac{1}{\rho}\right] &\text{if $\rho >  0$,}
\end{cases}
\end{equation}
and 
\begin{equation}
    IS_{\omega^d}(\bm{\phi}) = 
\begin{cases}
    \left[\frac{\sigma_{11}}{\sqrt{\sigma_{21}^{2}+\sigma_{22}^{2}}}\frac{1}{\rho},\frac{\sigma_{11}}{\sqrt{\sigma_{21}^{2}+\sigma_{22}^{2}}}\rho\right] & \text{if $\rho < 0$} \\
    (-\infty,0] & \text{if $\rho \geq  0$.}
\end{cases}
\end{equation}
A strong negative correlation means that the sign restrictions tightly identify the slope of the demand curve, but leave the slope of the supply curve completely unidentified, and vice versa for a strong positive correlation. For example, as $\rho \rightarrow -1$, the forecast errors approximately trace out the demand curve but remain consistent with supply curves ranging from horizontal to vertical, as illustrated in  panel~(a) of Figure~\ref{fig:illustration}. These familiar results (e.g. \citealt{Leamer_1981,Uhlig_2017}) are useful in understanding the conditions under which the sign restrictions are informative about historical decompositions.\footnote{\cite{Giannone_Primiceri_2024_nber} use supply and demand curves to interpret historical decompositions of inflation, whereas I use them to understand the conditions under which sign restrictions are informative about these decompositions.}

\begin{figure}[h]
	\centering
	\caption{Identifying Historical Decompositions from Forecast Errors}\label{fig:illustration}
    \begin{tabular}{c c}
    \multicolumn{2}{c}{Panel~(a)} \\
    \includegraphics[scale=0.33]{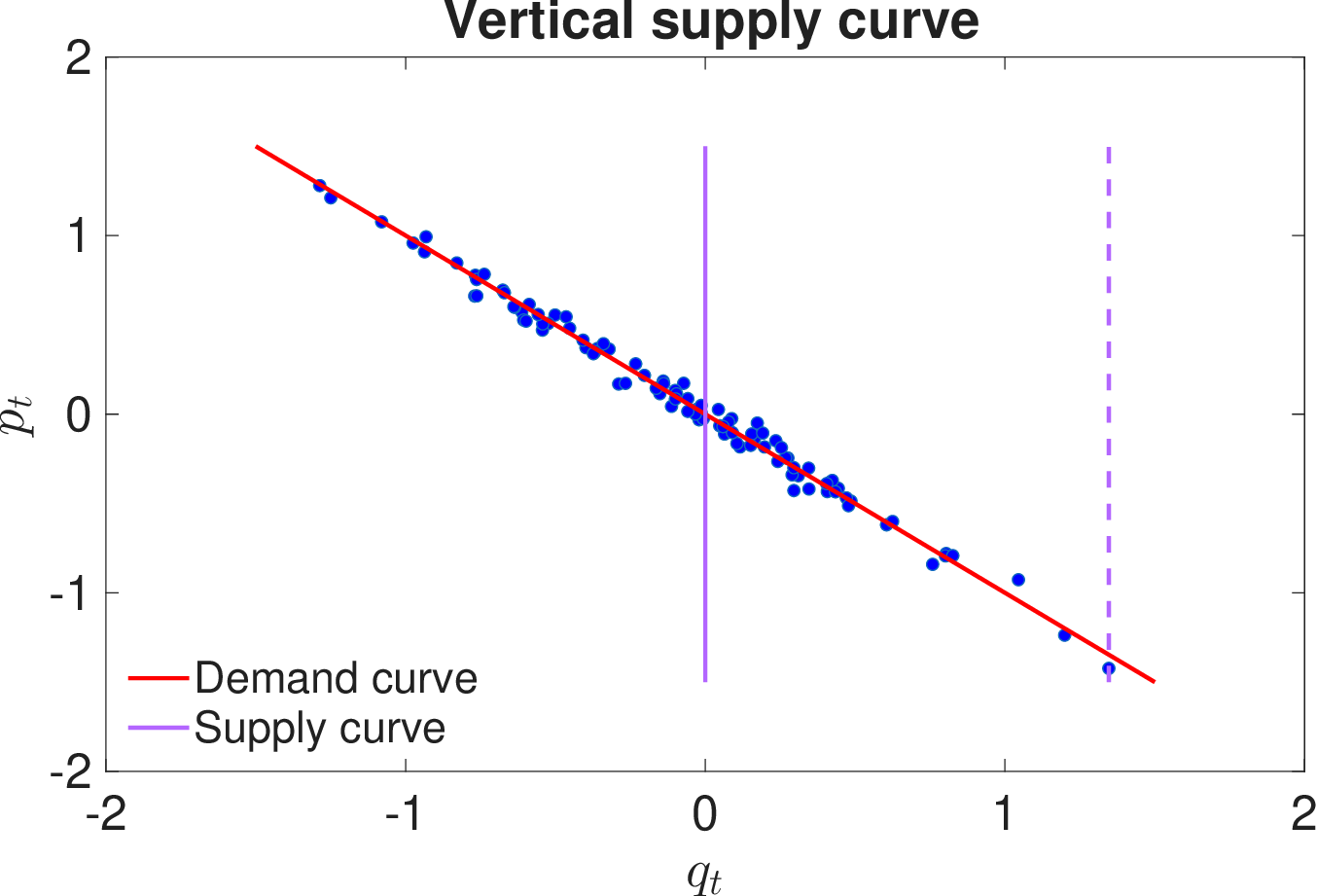} & \includegraphics[scale=0.33]{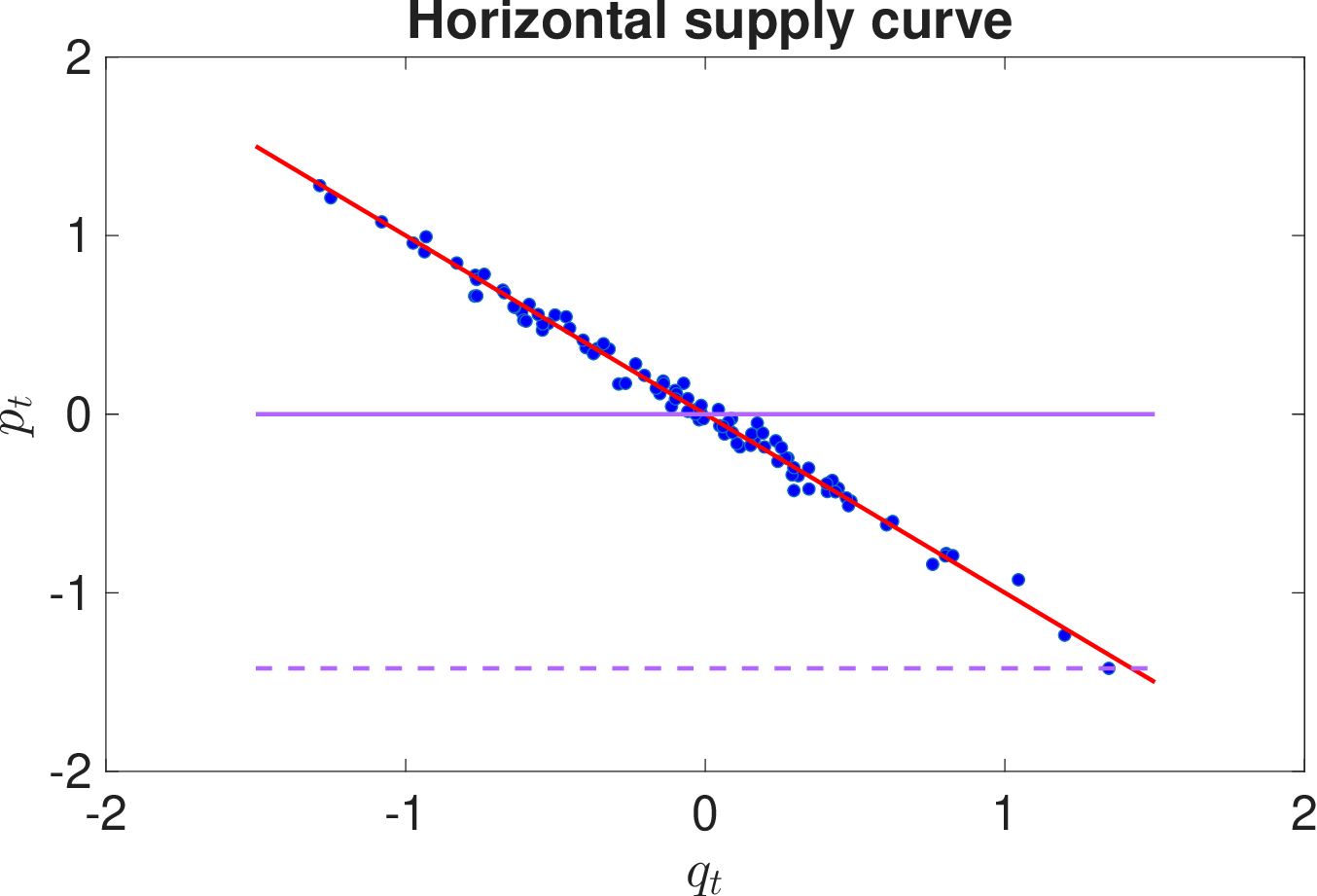} \\
    \multicolumn{2}{c}{Panel~(b)} \\
    \includegraphics[scale=0.33]{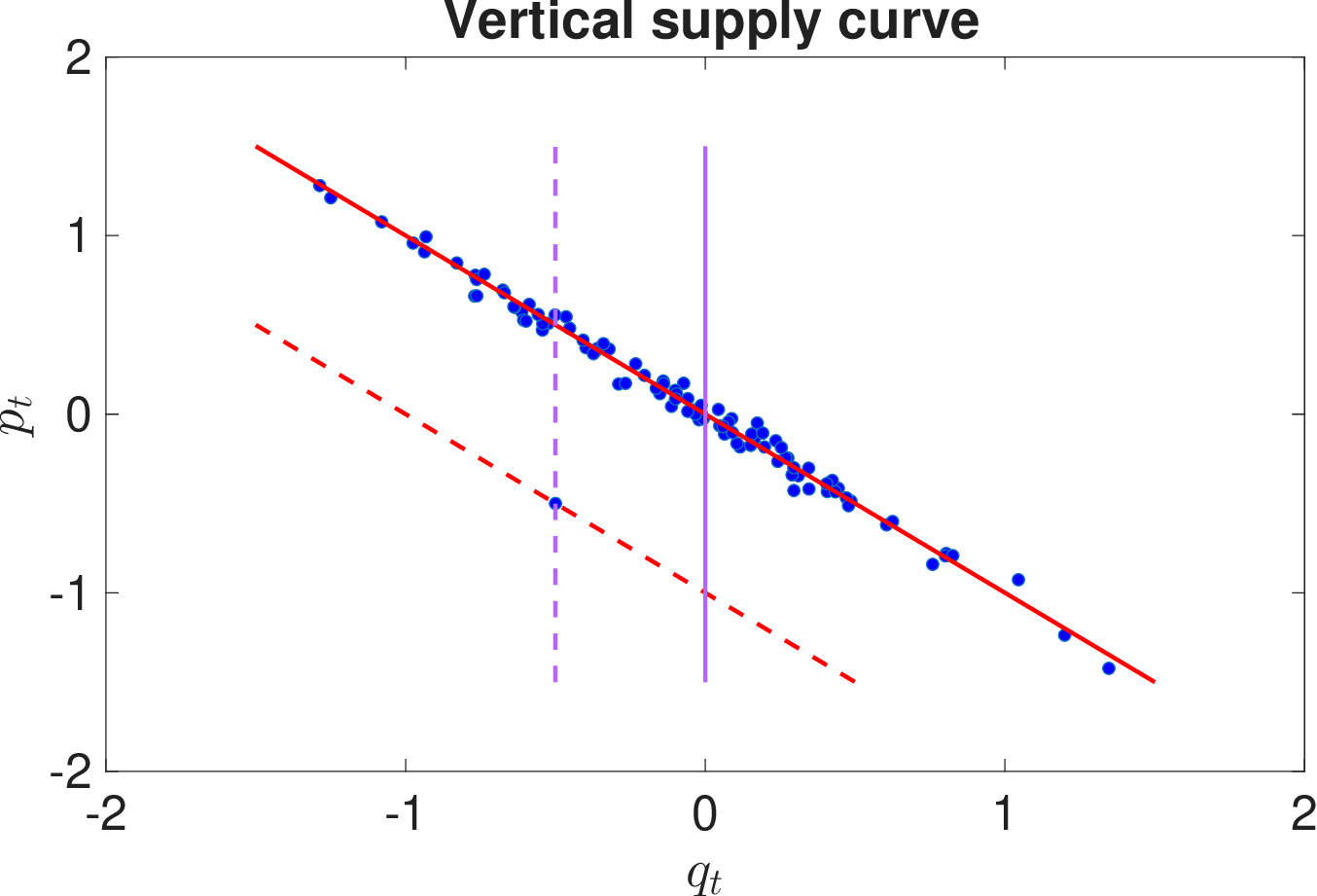} & \includegraphics[scale=0.33]{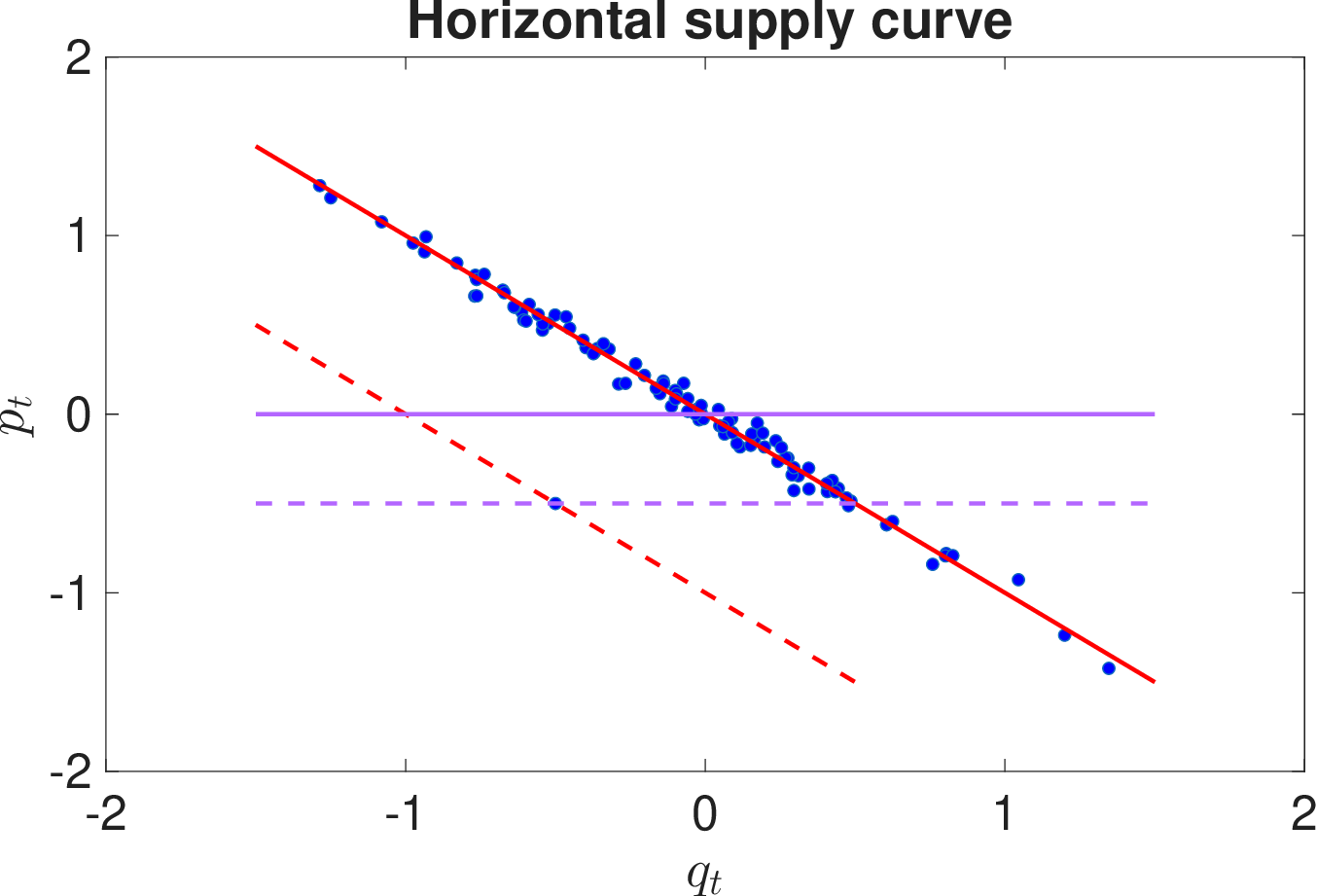}     
    \end{tabular}
    \noindent\begin{minipage}{\textwidth}
	\small\textbf{Notes:} Forecast errors simulated from data-generating process with $\rho \approx -1$. Panel~(a) illustrates possible decompositions of a forecast error that conforms to the historical relationship. Panel~(b) illustrates possible decompositions of a forecast error that departs from the historical relationship.
     \end{minipage}  
\end{figure}

\subsection{Historical decomposition}

In this two-variable model, we can write the historical decomposition of $p_t$ with respect to the supply shock, $H_{t,t+h}$, as a function of $\theta$, $\bm{\phi}$ and $\left\{\mathbf{u}_{l}\right\}_{l=t}^{t+h}$:
\begin{equation}\label{eq:hdtheta}
    H_{t,t+h}(\theta,\bm{\phi},\left\{\mathbf{u}_{l}\right\}_{l=t}^{t+h}) =  \begin{bmatrix}
        \cos\theta & \sin\theta
    \end{bmatrix}
    \bm{\Omega}_{h}(\bm{\phi}, \left\{\mathbf{u}_{l}\right\}_{l=t}^{t+h})
    \begin{bmatrix}
        \cos\theta \\
        \sin\theta
    \end{bmatrix},
\end{equation}
where 
\begin{equation}\label{eq:omega}
    \bm{\Omega}_{h}(\bm{\phi}, \left\{\mathbf{u}_{l}\right\}_{l=t}^{t+h}) = \sum_{l=0}^{h}\mathbf{c}_{1l}(\bm{\phi})(\bm{\Sigma}_{tr}^{-1}\mathbf{u}_{t+h-l})'.
\end{equation}
The identified set for $\theta$ in (\ref{eq:istheta}) can be viewed as inducing an identified set for $H_{t,t+h}$:\footnote{Since $IS_{\theta}(\bm{\phi})$ is an interval and $H_{t,t+h}(\theta,\bm{\phi},\left\{\mathbf{u}_{l}\right\}_{l=t}^{t+h})$ is a continuous function of $\theta$, $IS_{H_{t,t+h}}(\bm{\phi},\left\{\mathbf{u}_{l}\right\}_{l=t}^{t+h})$ is also an interval.}
\begin{align}
IS_{H_{t,t+h}}(\bm{\phi},\left\{\mathbf{u}_{l}\right\}_{l=t}^{t+h}) &= \left\{H_{t,t+h}(\theta,\bm{\phi},\left\{\mathbf{u}_{l}\right\}_{l=t}^{t+h}): \theta \in IS_{\theta}(\bm{\phi})\right\} \\
&= \left[L_{t,t+h}(\bm{\phi},\left\{\mathbf{u}_{l}\right\}_{l=t}^{t+h}),U_{t,t+h}(\bm{\phi},\left\{\mathbf{u}_{l}\right\}_{l=t}^{t+h})\right],
\end{align}
where (suppressing arguments) $L_{t,t+h}$ and $U_{t,t+h}$ are, respectively, the minimum and maximum of $H_{t,t+h}$ over $IS_{\theta}(\bm{\phi})$. Any value of $H_{t,t+h}$ in $IS_{H_{t,t+h}}(\bm{\phi},\left\{\mathbf{u}_{l}\right\}_{l=t}^{t+h})$ is equally consistent with the identifying restrictions, the second moments of the data and the observed forecast errors. Online Appendix~\ref{sec:computation} describes a computationally simple and accurate numerical procedure for computing these identified sets in practice.

\subsubsection{When are the sign restrictions informative?}

Assume the object of interest is the contribution of the supply shock to the one-step-ahead forecast error in $p_t$:
\begin{equation}\label{eq:hd_bivariate}
    H_{t,t}(\theta,\bm{\phi},\mathbf{u}_{t}) =  u_{pt}\cos^{2}\theta + \frac{1}{\sigma_{22}}(\sigma_{11}u_{qt}-\sigma_{21}u_{pt})\sin\theta\cos\theta.
\end{equation}
This case is useful to consider because the identified set is analytically tractable, making it clear how features of the data govern the strength of identification. The following discussion draws on an analytical characterisation of $IS_{H_{t,t}}(\bm{\phi},\left\{\mathbf{u}_{l}\right\}_{l=t}^{t+h})$, presented in Online Appendix~\ref{sec:analytical}.\footnote{The discussion here would also apply to other concepts of the historical decomposition (e.g. $H_{1,t}$) in cases where shocks have non-zero effects only on impact; in that case, $\mathbf{c}_{1l}'(\bm{\phi}) = (0, 0)$ for all $l > 0$ and (\ref{eq:hd_bivariate}) is equivalent to $H_{s,t}(\theta,\bm{\phi},\{\mathbf{u}_{l}\}_{l=1}^{t})$ for $1 \leq s \leq t$.}  

The identified set always admits one `extreme' decomposition. If $\rho < 0$, then $0 \in IS_{\theta}(\bm{\phi})$ and evaluating (\ref{eq:hd_bivariate}) at $\theta = 0$ gives $H_{t,t} = u_{pt}$. The price forecast error may therefore be attributed entirely to the supply shock, corresponding to a horizontal supply curve. If $\rho \geq 0$, then $-\pi/2 \in IS_{\theta}(\bm{\phi})$, and evaluating (\ref{eq:hd_bivariate}) at $\theta = -\pi/2$ gives $H_{t,t} = 0$. The price forecast error may then be attributed entirely to the demand shock, corresponding to a horizontal demand curve. It follows that the restrictions are uninformative when the reduced-form innovations are uncorrelated. If $\rho = 0$, then $IS_{\theta}(\bm{\phi}) = [-\pi/2,0]$, so both zero and $u_{pt}$ are in $IS_{H_{t,t}}(\bm{\phi},\mathbf{u}_t)$. Thus, the observed price forecast error can be explained entirely by either a demand or a supply shock. By continuity, the restrictions will be relatively uninformative when $\rho$ is close to zero.

A large absolute correlation can make the restrictions more informative, but only if the realised forecast error aligns closely with the historical price-quantity relationship. Suppose, for example, that $\rho \approx -1$, so the forecast errors approximately trace out a downward-sloping demand curve. Although supply curves with widely differing slopes remain admissible, a forecast error lying close to this historical relationship will imply similar supply contributions across those curves. Panel~(a) of Figure~\ref{fig:illustration} illustrates this case; whether the admissible supply curve is vertical, horizontal or has an intermediate slope, the observed price change is attributed predominantly to the supply shock, yielding a narrow identified set.

Strong correlation is not sufficient, however, for the sign restrictions to deliver informative identified sets. If the realised forecast error departs from the historical relationship, its decomposition may vary substantially across admissible models. In panel~(b), the same decline in price and quantity can be generated under both vertical and horizontal supply curves, but the implied supply contributions have opposite signs. The identified set is therefore wide despite the fact that $\rho \approx -1$. Hence, $\rho$ governs only the \emph{potential} identifying power of the sign restrictions, while the realised forecast error determines whether that potential is realised. To summarise, when interest is in the historical decomposition, $\rho$ is not a sufficient statistic for the identifying power of the restrictions. This contrasts with other structural objects, such as impulse responses, elasticities and forecast error variance decompositions (e.g. \citealt{Baumeister_Hamilton_2015,Uhlig_2017,Inoue_Kilian_2026}).

\subsubsection{Beyond the bivariate model}
\label{sec:beyondbivariate}

The discussion above has focused on the bivariate SVAR, motivated by its use in empirical settings, but a natural question is whether the features of the identification problem extend to larger SVARs. To examine this question, Online Appendix~\ref{app:beyondbivariate} considers a three-variable model in which the supply-demand sign restrictions are imposed on prices and quantities, while the responses associated with the additional variable and shock are left unrestricted. 

The example shows that adding a variable without imposing additional restrictions need not sharpen identification and may instead make identified sets for the historical decomposition substantially less informative. Even when price and quantity innovations are strongly correlated and the realised forecast errors conform closely to their historical relationship, the identified set for the supply contribution can remain wide; in the language of \cite{Wolf_2020}, the additional unidentified shock can `masquerade' as the supply shock.

Thus, the favourable conditions under which supply-demand decompositions can be tightly identified in the bivariate model are not sufficient in larger systems. Adding variables is unlikely to sharpen identification in the absence of credible identifying restrictions that distinguish the additional shocks from the supply and demand shocks of interest.

\subsubsection{The role of additional identifying restrictions}

When the supply-demand sign restrictions yield wide identified sets, imposing additional restrictions could sharpen identification and may be useful in particular applications. However, a crucial question is whether they can, in general, be credibly applied in this setting. 

Two types of restrictions that have been used to sharpen identification in other settings are narrative restrictions (e.g. \citealt{Antolin-Diaz_Rubio-Ramirez_2018,Ludvigson_Ma_Ng_2021}) and restrictions based on external instruments (e.g. \citealt{Ludvigson_Ma_Ng_2021,Braun_Bruggemann_2023}). A challenge in exploiting those restrictions in the current setting is that the disturbances in the bivariate model need not correspond to particular fundamental shocks; if the data-generating process contains more structural shocks than variables in the VAR, the VAR disturbances may combine current and past shocks of different economic types (\citealt{Canova_Ferroni_2022}). Consequently, a narrative or instrument relating to a specific shock, such as a monetary policy or oil supply shock, cannot necessarily be mapped to a restriction on the model's supply or demand disturbances without additional assumptions. 

Restrictions on dynamic responses or elasticities (e.g. \citealt{Kilian_Murphy_2012}) could also sharpen identification, but their credibility would depend on information beyond the basic assumptions that supply curves slope upwards and demand curves slope downward. Thus, additional restrictions may be useful in particular applications, but they necessarily introduce substantive assumptions beyond those whose identifying content is examined here.  

\section{Supply-demand Decompositions of US Inflation}
\label{sec:empirical}

This section empirically assesses the identifying power of the supply-demand sign restrictions using aggregate data, following \cite{Giannone_Primiceri_2024_nber} (henceforth, GP24), and disaggregated data, motivated by \cite{Shapiro_2026}.

\subsection{Aggregate decomposition}
\label{subsec:agg}

Using a two-variable SVAR identified via sign restrictions on the impulse responses to supply and demand shocks, GP24 decompose US inflation into contributions from shocks to aggregate supply and demand. GP24 estimate a four-lag VAR in log CPI and log real GDP over 1997:Q1--2019:Q4 and decompose post-2019 forecast errors in year-ended inflation.

I estimate the same reduced-form VAR as GP24, using their specification, sample and Bayesian prior for the reduced-form parameters.\footnote{I am grateful to Domenico Giannone and Giorgio Primiceri for providing me with their replication code.} The reduced-form prior combines Minnesota and sum-of-coefficients priors, following \cite{Giannone_Lenza_Primiceri_2015}.\footnote{The sum-of-coefficients prior reduces posterior uncertainty about the deterministic component, which can be an important source of uncertainty about historical decompositions (\citealt{Bergholt_etal_2024}).} GP24 supplement the sign restrictions with a uniform prior for $\mathbf{Q}$ (equivalently, $\theta$) and report pointwise posterior means of the historical decompositions. When estimating identified sets, I use the prior-robust Bayesian approach of \cite{Giacomini_Kitagawa_2021}, which considers the class of conditional priors assigning probability one to $IS_{\theta}(\bm{\phi})$. I summarise the resulting class of posteriors using the set of posterior means and a robust credible interval. The former provides an estimate of the identified set, while the latter captures posterior uncertainty about the reduced-form parameters. Online Appendix~\ref{app:aggregate_imp} provides implementation details.

\begin{figure}[!htbp]
	\centering
	\caption{Contribution of Supply Shocks to Year-ended US CPI Inflation}\label{fig:GP24}
         \includegraphics[scale=0.6]{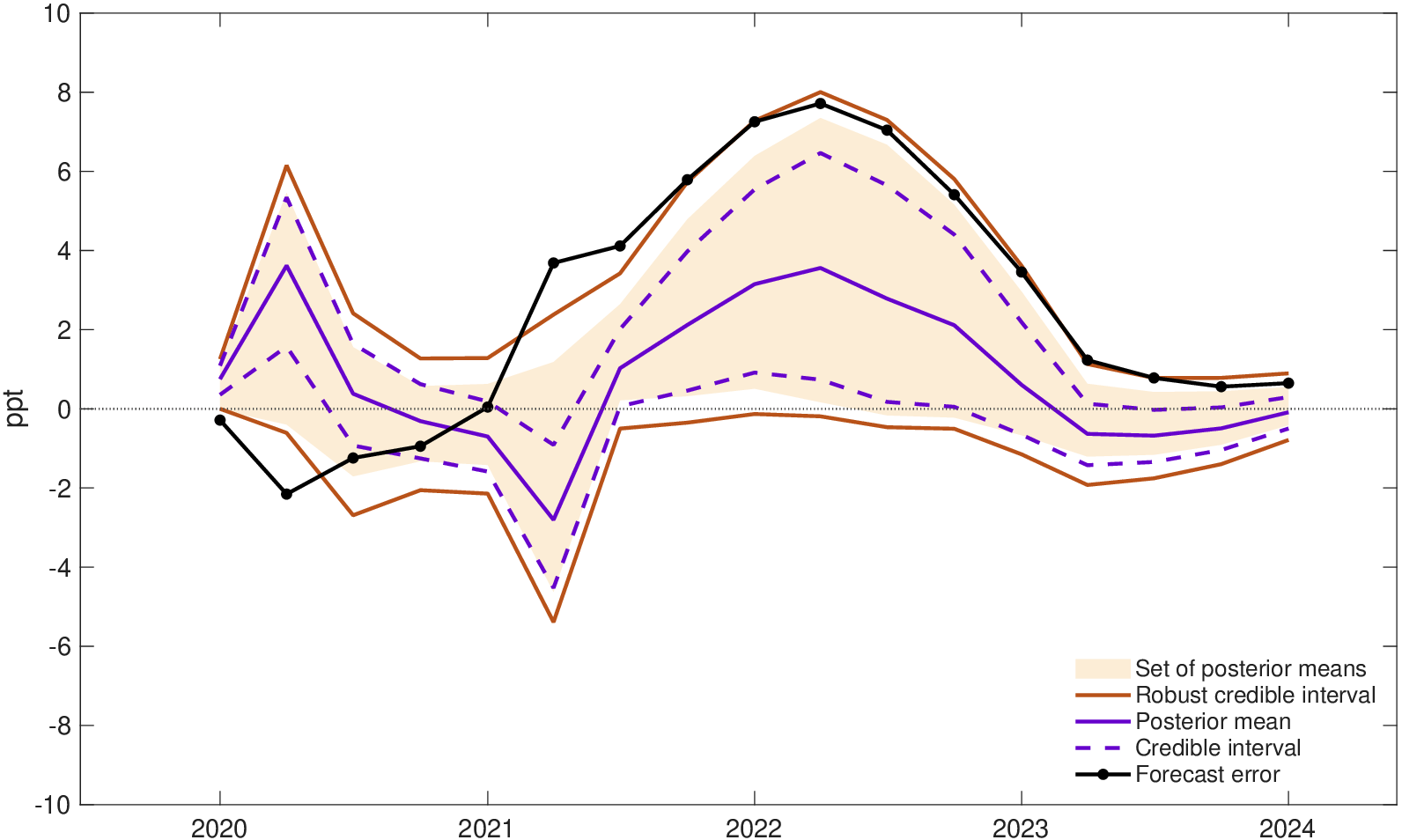}
    \noindent\begin{minipage}{\textwidth}
	\small\textbf{Notes:} Contribution of supply shocks to forecast error in year-ended CPI inflation given information available up to 2019:Q4. Set of posterior means is interpretable as estimator of identified set. Standard and robust credible intervals are at the 68~per cent credibility level.
     \end{minipage}  
\end{figure}

Figure~\ref{fig:GP24} plots the supply contributions under both the uniform prior and the robust Bayesian approach. The posterior mean contributions under the uniform prior replicate the results in GP24; at its peak in 2022:Q2, a bit less than half of the 7.7~percentage point forecast error in year-ended inflation was attributed to supply shocks (though there is substantial posterior uncertainty about this contribution). In contrast, the set of posterior means spans 0.2--7.3~percentage points, so the restrictions do not determine whether the inflation forecast error was predominantly supply- or demand-driven. This uninformativeness partly reflects a relatively weak correlation between price and quantity innovations over the estimation period; the posterior mean for $\rho$ is only 0.23 with 68~per cent credible interval $[0.13,0.33]$.\footnote{In an analogous exercise, GP24 also decompose consumer price inflation in the euro area. Online Appendix~\ref{app:euro} repeats the analysis above in this setting; the sign restrictions are again largely uninformative about the magnitude of the supply and demand contributions. Online Appendix D.3 extends the analysis of US inflation to the three-variable SVARs considered in GP24. The resulting identified sets remain wide, indicating that augmenting the model with an additional variable and additional sign restrictions does not materially sharpen identification in these particular cases.}

\subsection{Disaggregated decompositions}
\label{subsec:disagg}

To assess whether disaggregation sharpens identification, I follow \cite{Shapiro_2026} and estimate decompositions for 136 categories in the Personal Consumption Expenditures (PCE) basket. For each category $k$, I estimate by OLS a 12-lag VAR in monthly PCE inflation and real expenditure growth over 1988:M1--2023:M9. Online Appendix~\ref{app:disaggregated_det} provides details. 

\begin{figure}[htp]
	\centering
	\caption{Supply Shock Contribution to Inflation in Selected PCE Categories}
	\label{fig:disaggregated_decomp}
    \begin{tabular}{c}
          \includegraphics[scale=0.6]{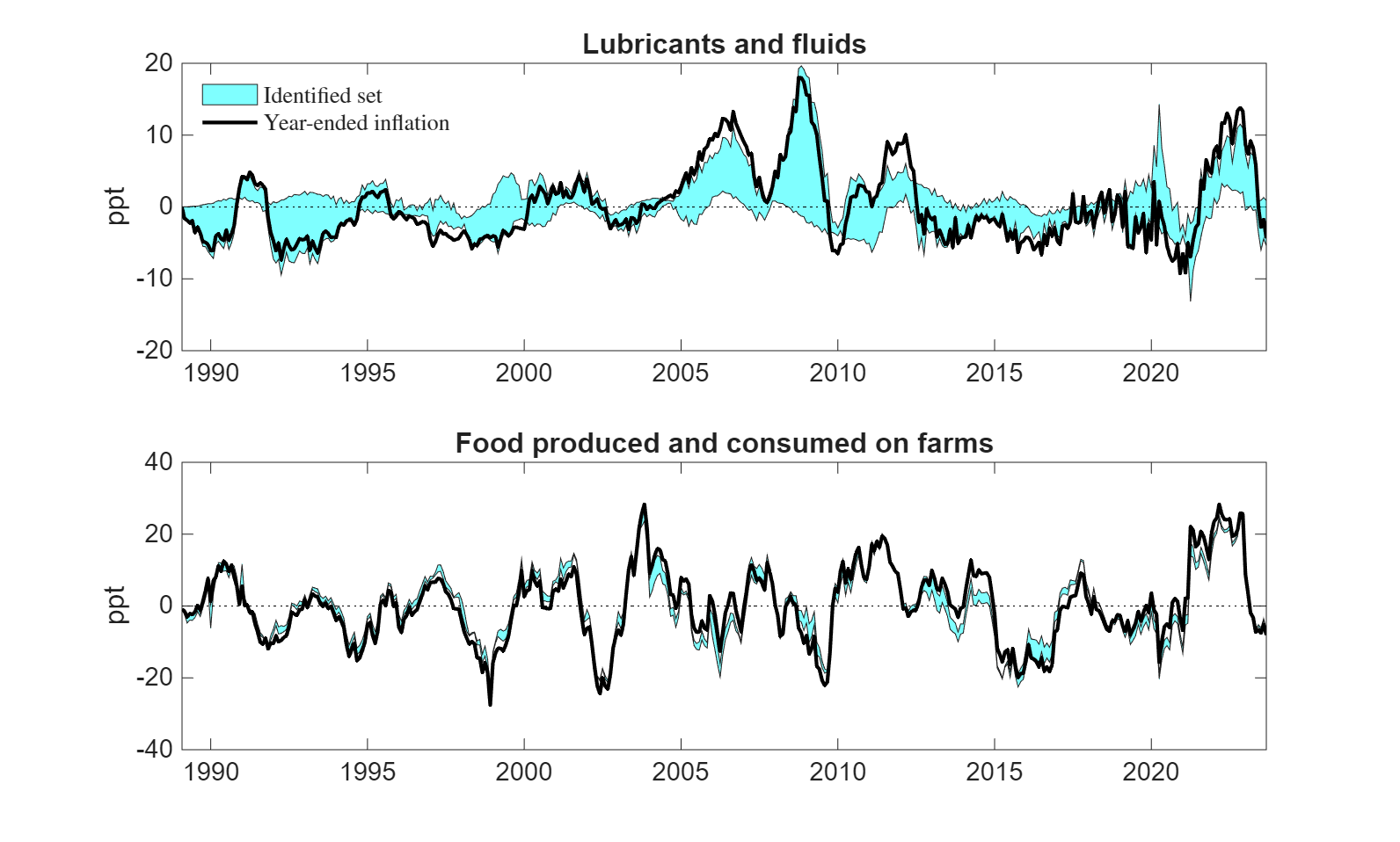}
    \end{tabular}
    \noindent\begin{minipage}{\textwidth}
	\small\textbf{Notes:} Categories with smallest and largest absolute innovation correlations. Inflation is in deviations from deterministic component (i.e. contribution of constant and initial conditions).
    \end{minipage}
\end{figure}

At this level of disaggregation, price-quantity correlations tend to be weak; absolute correlations exceed 0.2 in only about half of categories and are below 0.4 in about 80~per cent. Identified sets for historical decompositions therefore tend to be wide. Figure~\ref{fig:disaggregated_decomp} illustrates this result using the categories with the smallest and largest absolute correlations. The category with the smallest absolute correlation is `lubricants and fluids' ($|\rho| < 0.0001$); for this category, the identified set often includes both zero and the observed inflation rate, implying that the sign restrictions cannot distinguish between supply- and demand-driven price changes. In contrast, for `food produced and consumed on farms', $\rho = -0.98$ and the identified sets are narrow in most periods, though their width varies over time depending on the realisations of the innovations. The strongly negatively correlated innovations approximately trace out a short-run demand curve, allowing inflation in this category to be attributed predominantly to supply shocks. For example, in March 2022, the estimated supply contribution was 23.9--25.1 percentage points, compared with inflation of 28.4 per cent (net of deterministic components). Thus, disaggregation occasionally yields informative decompositions, but the sign restrictions remain weakly informative for most categories.

\section{Conclusion}
\label{sec:conclusion}

Sign restrictions on the slopes of supply and demand curves are appealing because they impose relatively uncontroversial assumptions. However, they generally set-identify historical decompositions and need not determine whether observed price changes were predominantly supply- or demand-driven. The identifying power of the restrictions with respect to historical decompositions depends both on reduced-form parameters and realised forecast errors, unlike impulse responses, elasticities and forecast error variance decompositions, where the strength of identification depends on the reduced-form parameters alone. 

Empirically, the restrictions yield wide identified sets for decompositions of aggregate US inflation and in most expenditure categories, although disaggregation produces sharp decompositions in a few cases with unusually strong correlations between price and quantity innovations. Consequently, conclusions  about the drivers of inflation based on a single admissible decomposition may be driven largely by auxiliary assumptions, such as the prior used to select among observationally equivalent models.

\putbib
\end{bibunit}

\clearpage

\makeatletter
\gdef\@thanks{}
\renewcommand{\thefootnote}{\fnsymbol{footnote}} 
\setcounter{footnote}{1}                          
\makeatother

\title{Supplemental Appendix for ``Sign Restrictions and Supply-Demand Decompositions of Inflation''}
\author{Matthew Read\thanks{Economic Research Department, Reserve Bank of Australia. email: readm@rba.gov.au}}
\date{\today}    

\begin{bibunit}
\maketitle

\renewcommand{\thefootnote}{\arabic{footnote}}
\setcounter{footnote}{0}

This Supplemental Appendix contains additional material supporting the analysis in the main text. Appendix~\ref{sec:computation} explains how to numerically compute the end points of the identified set for the historical decomposition. Appendix~\ref{sec:analytical} analytically characterises identified sets for historical decompositions in the special case where interest is in decomposing one-step-ahead forecast errors. Appendix~\ref{app:beyondbivariate} examines the informativeness of the supply-demand sign restrictions when adding an additional variable to the model. Appendix~\ref{app:additionalresults} provides additional details related to the empirical applications.

\renewcommand{\theequation}{A\arabic{equation}}
\renewcommand{\thesection}{A}
\renewcommand{\thefigure}{A\arabic{figure}}
\setcounter{equation}{0}
\setcounter{figure}{0}
\section{Computing identified sets}
\label{sec:computation}

This appendix explains how to numerically compute the end points of the identified set for the historical decomposition, $H_{t,t+h}$.

$IS_{H_{t,t+h}}(\bm{\phi},\left\{\mathbf{u}_{l}\right\}_{l=t}^{t+h})$ can be computed by solving the optimisation problems that define $L_{t,t+h}(\bm{\phi},\left\{\mathbf{u}_{l}\right\}_{l=t}^{t+h})$ and $U_{t,t+h}(\bm{\phi},\left\{\mathbf{u}_{l}\right\}_{l=t}^{t+h})$. Focusing on the upper bound, $U_{t,t+h}(\bm{\phi},\left\{\mathbf{u}_{l}\right\}_{l=t}^{t+h})$ corresponds to either an end point of $IS_{\theta}(\bm{\phi})$ or a critical point of $H_{t,t+h}(\theta,\bm{\phi},\left\{\mathbf{u}_{l}\right\}_{l=t}^{t+h})$ in the interior of $IS_{\theta}(\bm{\phi})$. Evaluating $H_{t,t+h}(\theta,\bm{\phi},\left\{\mathbf{u}_{l}\right\}_{l=t}^{t+h})$ at the end points of $IS_{\theta}(\bm{\phi})$ is straightforward using (\ref{eq:istheta}) and (\ref{eq:hdtheta}). To find a critical point, rewrite $H_{t,t+h}$ as a function of $\mathbf{q}_{1}$. The problem $\max_{\mathbf{q}_{1}}H_{t,t+h}(\mathbf{q}_{1},\bm{\phi},\left\{\mathbf{u}_{l}\right\}_{l=t}^{t+h})$ subject to $\mathbf{q}_{1}'\mathbf{q}_{1} = 1$ has first-order necessary condition $\mathbf{S}_h(\bm{\phi},\left\{\mathbf{u}_{l}\right\}_{l=t}^{t+h})\mathbf{q}_1 =  \lambda \mathbf{q}_{1}$, where $\mathbf{S}_h(\bm{\phi},\left\{\mathbf{u}_{l}\right\}_{l=t}^{t+h}) = (1/2)(\bm{\Omega}_{h}(\bm{\phi}, \left\{\mathbf{u}_{l}\right\}_{l=t}^{t+h})+\bm{\Omega}_{h}(\bm{\phi}, \left\{\mathbf{u}_{l}\right\}_{l=t}^{t+h})')$ and $\lambda$ is the Lagrange multiplier on the constraint. Solutions to this equation are eigenvectors of $\mathbf{S}_h(\bm{\phi},\left\{\mathbf{u}_{l}\right\}_{l=t}^{t+h})$. I compute the two eigenvectors, normalise their signs so that the sign normalisation $\mathbf{e}_{1}'\mathbf{A}_{0}\mathbf{e}_{1} = (\bm{\Sigma}_{tr}^{-1}\mathbf{e}_{1})'\mathbf{q}_{1} \geq 0$ holds, and check whether either normalised eigenvector lies within the identified set for $\mathbf{q}_1$.\footnote{The identified set for $\mathbf{q}_1$ is $\{\mathbf{q}_1 = (\cos\theta, \sin\theta)': \theta \in IS_{\theta}(\bm{\phi})\}$. Checking whether $\mathbf{q}_1^{*} = (q_{1,1}^*,q_{1,2}^*)'$ lies in the identified set for $\mathbf{q}_1$ is equivalent to checking whether $\theta^{*} = \operatorname{atan2}(q_{1,2}^*,q_{1,1}^*)$ lies in $IS_{\theta}(\bm{\phi})$, where $\operatorname{atan2}$ denotes the two-argument arctangent, taking values in $(-\pi,\pi]$.} $U_{t,t+h}(\bm{\phi},\left\{\mathbf{u}_{l}\right\}_{l=t}^{t+h})$ is then obtained by direct comparison of the function values at the end points and admissible critical points (if any). $L_{t,t+h}(\bm{\phi},\left\{\mathbf{u}_{l}\right\}_{l=t}^{t+h})$ is obtained similarly.

Often the historical decomposition of interest involves a transformation of the variables entering the VAR. In cases like this, the numerical procedure above can easily be adapted to compute the end points of the identified set. To give an example, the VAR from \cite{Giannone_Primiceri_2024_nber} (examined in Section~\ref{subsec:agg}) includes the price level $p_t$ but we are interested in decomposing year-ended inflation. The data are quarterly, so year-ended inflation is $\pi_t^{(ye)} =p_t - p_{t-4}$. Given a particular historical decomposition, $H_{t,t+h}$, we can obtain the historical decomposition of $\pi_{t+h}^{(ye)}$ as $H_{t,t+h}$ for $h=0,\ldots,3$ and $H_{t,t+h}-H_{t,t+h-4}$ for $h\geq 4$. From (\ref{eq:hdtheta}), it follows that, for $h\geq 4$,
\begin{equation}\label{eq:hdtheta_ye}
    H_{t,t+h}(\theta,\bm{\phi},\left\{\mathbf{u}_{l}\right\}_{l=t}^{t+h}) - H_{t,t+h-4}(\theta,\bm{\phi},\left\{\mathbf{u}_{l}\right\}_{l=t}^{t+h-4}) =  \begin{bmatrix}
        \cos\theta & \sin\theta
    \end{bmatrix}
    \tilde{\bm{\Omega}}_{h}(\bm{\phi}, \left\{\mathbf{u}_{l}\right\}_{l=t}^{t+h})
    \begin{bmatrix}
        \cos\theta \\
        \sin\theta
    \end{bmatrix},
\end{equation}
where
\begin{equation}
    \tilde{\bm{\Omega}}_{h}(\bm{\phi}, \left\{\mathbf{u}_{l}\right\}_{l=t}^{t+h}) = \bm{\Omega}_{h}(\bm{\phi}, \left\{\mathbf{u}_{l}\right\}_{l=t}^{t+h}) - \bm{\Omega}_{h-4}(\bm{\phi}, \left\{\mathbf{u}_{l}\right\}_{l=t}^{t+h-4}).
\end{equation}
The problem of finding the end points of the identified set for $H_{t,t+h}-H_{t,t+h-4}$ therefore has exactly the same structure as finding the end points of the identified set for $H_{t,t+h}$. Hence, the numerical procedure above can be applied with $\tilde{\bm{\Omega}}_{h}(\bm{\phi}, \left\{\mathbf{u}_{l}\right\}_{l=t}^{t+h})$ replacing $\bm{\Omega}_{h}(\bm{\phi}, \left\{\mathbf{u}_{l}\right\}_{l=t}^{t+h})$.\footnote{Similar reasoning applies when the endogenous variable entering the VAR is a monthly inflation rate, but the object of interest is the historical decomposition of year-ended inflation; see Online Appendix~\ref{app:disaggregated_det} for further discussion.}

An alternative procedure to compute the end points of the identified set is to obtain many random draws of $\theta$ from a uniform distribution over $IS_{\theta}(\bm{\phi})$, evaluate $H_{t,t+h}(\theta,\bm{\phi},\left\{\mathbf{u}_{l}\right\}_{l=t}^{t+h})$ at each draw and approximate the end points of $IS_{H_{t,t+h}}(\bm{\phi},\left\{\mathbf{u}_{l}\right\}_{l=t}^{t+h})$ using the minimum and maximum values of $H_{t,t+h}(\theta,\bm{\phi},\left\{\mathbf{u}_{l}\right\}_{l=t}^{t+h})$ over these draws. A drawback of this approach is that it suffers from approximation error that vanishes only as the number of draws of $\theta$ diverges (e.g. \citealt{Montiel-Olea_Nesbit_2021}). In contrast, the numerical procedure described above does not suffer from such approximation error. The procedure is also computationally simple, requiring evaluating $H_{t,t+h}(\theta,\bm{\phi},\left\{\mathbf{u}_{l}\right\}_{l=t}^{t+h})$ at a small number of candidate values (the two end points of $IS_{\theta}(\bm{\phi})$ and at most two admissible eigenvectors of $\mathbf{S}_h(\bm{\phi},\left\{\mathbf{u}_{l}\right\}_{l=t}^{t+h})$).

\renewcommand{\theequation}{B\arabic{equation}}
\renewcommand{\thesection}{B}
\renewcommand{\thefigure}{B\arabic{figure}}
\setcounter{equation}{0}
\setcounter{figure}{0}
\section{Analytical Characterisation of the Identified Set}
\label{sec:analytical}

This appendix analytically characterises the identified set for the contribution of the supply shock to the one-step-ahead forecast error in $p_t$, $H_{t,t}$. Let 
\begin{equation}
    C(\bm{\phi},\mathbf{u}_t) = \frac{(\sigma_{11}u_{qt} - \sigma_{21}u_{pt})}{\sigma_{22}}.
\end{equation}
From (\ref{eq:hd_bivariate}), $H_{t,t}$ can then be written as
\begin{equation}
	H_{t,t}(\theta,\bm{\phi},\mathbf{u}_t) = u_{pt}\cos^{2}\theta + C(\bm{\phi},\mathbf{u}_t)\sin\theta \cos \theta.
\end{equation}
Let 
\begin{equation}
    IS_{\theta}(\bm{\phi}) = [\underline{\theta}(\bm{\phi}),\overline{\theta}(\bm{\phi})] = \begin{cases}
\left[\arctan\left(\frac{\sigma_{22}}{\sigma_{21}}\right),0\right] & \text{if $\rho < 0$} \\
\left[-\frac{\pi}{2},\arctan\left(-\frac{\sigma_{21}}{\sigma_{22}}\right)\right] & \text{if $\rho \geq 0$.}
\end{cases}
\end{equation}
Given the sign restrictions, $IS_{\theta}(\bm{\phi})$ induces an identified set for $H_{t,t}$:
\begin{equation}
    IS_{H_{t,t}}(\bm{\phi},\mathbf{u}_t) = \{H_{t,t}(\theta,\bm{\phi},\mathbf{u}_t) : \theta \in IS_{\theta}(\bm{\phi})\}.
\end{equation}
Because $IS_{\theta}(\bm{\phi})$ is a closed interval and $H_{t,t}(\theta,\bm{\phi},\mathbf{u}_t)$ is continuous in $\theta$, $IS_{H_{t,t}}(\bm{\phi},\mathbf{u}_t)$ is a closed interval with end points attained either at the end points of $IS_{\theta}(\bm{\phi})$ or at an interior critical point of $H_{t,t}(\theta,\bm{\phi},\mathbf{u}_t)$. Proposition~\ref{prop:htt_analytical} formally characterises the end points of $IS_{H_{t,t}}(\bm{\phi},\mathbf{u}_t)$.

\begin{proposition}\label{prop:htt_analytical}
    Suppose $u_{pt} \neq 0$ and $C(\bm{\phi},\mathbf{u}_t) \neq 0$. Define
    \begin{equation}
        R(\bm{\phi},\mathbf{u}_t) = \frac{C(\bm{\phi},\mathbf{u}_t)}{u_{pt}}
    \end{equation}
    and
    \begin{equation}\label{eq:thetastar}
    	\theta^*(\bm{\phi},\mathbf{u}_t) = 
    	\begin{cases}
    		\frac{1}{2} \arctan\left(R(\bm{\phi},\mathbf{u}_t)\right) & \text{if $R(\bm{\phi},\mathbf{u}_t) < 0$} \\
    		\frac{1}{2} \arctan\left(R(\bm{\phi},\mathbf{u}_t)\right) - \frac{\pi}{2} & \text{if $R(\bm{\phi},\mathbf{u}_t) > 0$.} \\		
    	\end{cases}
    \end{equation}
    The only possible interior critical point is $\theta^*(\bm{\phi},\mathbf{u}_t)$ defined in (\ref{eq:thetastar}).

    Define the set
    \begin{equation}\label{eq:thetaset}
        \Theta(\bm{\phi},\mathbf{u}_t) = \{\underline{\theta}(\bm{\phi}),\overline{\theta}(\bm{\phi})\} \cup \{\theta^{*}(\bm{\phi},\mathbf{u}_t):\theta^{*}(\bm{\phi},\mathbf{u}_t) \in \operatorname{int}IS_{\theta}(\bm{\phi})\}.
    \end{equation}
    Then
    \begin{equation}\label{eq:is_htt}
        IS_{H_{t,t}}(\bm{\phi},\mathbf{u}_t) = \left[L_{t,t}(\bm{\phi},\mathbf{u}_t),U_{t,t}(\bm{\phi},\mathbf{u}_t)\right],
    \end{equation}
    where 
    \begin{equation}\label{eq:is_htt_bounds}
    L_{t,t}(\bm{\phi},\mathbf{u}_t) = \min_{\theta \in \Theta(\bm{\phi},\mathbf{u}_t)} H_{t,t}(\theta,\bm{\phi},\mathbf{u}_t) \quad \text{and} \quad  U_{t,t}(\bm{\phi},\mathbf{u}_t) = \max_{\theta \in \Theta(\bm{\phi},\mathbf{u}_t)} H_{t,t}(\theta,\bm{\phi},\mathbf{u}_t).
    \end{equation}
    The minimum and maximum of $H_{t,t}$ over $IS_{\theta}(\bm{\phi})$ are therefore attained over the candidate set $\Theta(\bm{\phi},\mathbf{u}_t)$.
\end{proposition}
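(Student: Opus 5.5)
The plan is to reduce the problem to the elementary fact that a differentiable function on a compact interval attains its extrema at the end points or at interior critical points. The work is then to locate all interior critical points of $H_{t,t}$ explicitly.

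First I will confirm that $IS_{\theta}(\bm{\phi}) \subseteq [-\pi/2,0]$ in both cases of (\ref{eq:istheta}).
\begin{itemize}
\item If $\rho<0$, then $\sigma_{21}<0$ and $\sigma_{22}>0$, so $\arctan(\sigma_{22}/\sigma_{21}) \in (-\pi/2,0)$.
\item If $\rho \geq 0$, then $\arctan(-\sigma_{21}/\sigma_{22}) \in (-\pi/2,0]$.
\end{itemize}
Hence $\operatorname{int}IS_{\theta}(\bm{\phi}) \subseteq (-\pi/2,0)$, and for any interior $\theta$ we have $2\theta \in (-\pi,0)$.

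Next I use the double-angle identities to write
\begin{equation*}
H_{t,t}(\theta,\bm{\phi},\mathbf{u}_t) = \tfrac{1}{2}u_{pt}\bigl(1+\cos 2\theta\bigr) + \tfrac{1}{2}C(\bm{\phi},\mathbf{u}_t)\sin 2\theta .
\end{equation*}
This function is smooth in $\theta$, with derivative
\begin{equation*}
\partial H_{t,t}/\partial\theta = -u_{pt}\sin 2\theta + C(\bm{\phi},\mathbf{u}_t)\cos 2\theta .
\end{equation*}

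I then solve the first-order condition on $2\theta \in (-\pi,0)$.
\begin{itemize}
\item At $2\theta=-\pi/2$ the derivative equals $u_{pt}\neq 0$, so that point is never critical.
\item Elsewhere on $(-\pi,0)$ we have $\cos 2\theta \neq 0$, and the condition is equivalent to $\tan 2\theta = C/u_{pt} = R(\bm{\phi},\mathbf{u}_t)$.
\end{itemize}
Since $\tan$ has period $\pi$ and the open interval $(-\pi,0)$, with $-\pi/2$ removed, covers each value of $\tan$ exactly once, there is exactly one solution. Because $R\neq 0$, which follows from $C\neq 0$, the solution lies either in $(-\pi/2,0)$ or in $(-\pi,-\pi/2)$.
\begin{itemize}
\item If $R<0$, then $\arctan R \in (-\pi/2,0)$, so $2\theta=\arctan R$ and $\theta=\tfrac12\arctan R$.
\item If $R>0$, then $\arctan R\in(0,\pi/2)$, so the solution in $(-\pi,0)$ is $2\theta = \arctan R - \pi$, giving $\theta = \tfrac12\arctan R - \pi/2$.
\end{itemize}
This is exactly $\theta^*(\bm{\phi},\mathbf{u}_t)$ in (\ref{eq:thetastar}). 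It is the unique critical point in $(-\pi/2,0)$, so it is the only possible interior critical point of $H_{t,t}$ on $IS_{\theta}(\bm{\phi})$, whether or not it actually falls inside that interval.

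Finally, $IS_{\theta}(\bm{\phi})$ is compact and $H_{t,t}$ is continuous, so the extreme value theorem gives a minimum and maximum attained on $IS_{\theta}(\bm{\phi})$. By Fermat's theorem, any extremum attained in the interior must occur at a critical point, so every extremiser lies in $\Theta(\bm{\phi},\mathbf{u}_t)$ as defined in (\ref{eq:thetaset}). This yields (\ref{eq:is_htt_bounds}).

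Because $IS_{\theta}(\bm{\phi})$ is an interval and $H_{t,t}$ is continuous, the intermediate value theorem shows that the image is the full closed interval $[L_{t,t},U_{t,t}]$. This gives (\ref{eq:is_htt}). The degenerate case $\rho=0$, where the end points are $-\pi/2$ and $0$, is covered by the same argument.

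The main obstacle is the branch bookkeeping in the critical-point step. One must check that the restriction $2\theta\in(-\pi,0)$ selects exactly one branch of $\arctan$ in each sign case of $R$, and that the excluded points $2\theta\in\{-\pi,-\pi/2,0\}$ cannot be critical. The hypotheses $u_{pt}\neq0$ and $C\neq0$ are used precisely at this point.
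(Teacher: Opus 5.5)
Your proposal is correct and follows essentially the same route as the paper. You restrict attention to the end points and interior critical points, rewrite the first-order condition $-u_{pt}\sin 2\theta + C\cos 2\theta = 0$ as $\tan 2\theta = R$, and select the unique solution in $(-\pi/2,0)$ according to the sign of $R$, using $(-\pi,0)\setminus\{-\pi/2\}$ to cover each \emph{nonzero} value of $\tan$ exactly once, which suffices because $R\neq 0$.
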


\begin{proof}
    Since $IS_{\theta}(\bm{\phi})$ is a closed interval and $H_{t,t}(\theta,\bm{\phi},\mathbf{u}_t)$ is continuous in $\theta$, the minimum and maximum of $H_{t,t}$ over $IS_{\theta}(\bm{\phi})$ are attained at an end point of $IS_{\theta}(\bm{\phi})$ or at an interior critical point. A critical point of $H_{t,t}(\theta,\bm{\phi},\mathbf{u}_t)$ satisfies the first-order condition
    \begin{equation}\label{eq:foc}
    	\frac{\partial H_{t,t}(\theta,\bm{\phi},\mathbf{u}_t)}{\partial \theta} = -u_{pt}\sin(2\theta) + C(\bm{\phi},\mathbf{u}_t) \cos(2\theta) = 0.
    \end{equation}
    Except on the events $u_{pt} = 0$ and $C(\bm{\phi},\mathbf{u}_t) =0$, which have probability zero under an absolutely continuous distribution for $\mathbf{u}_t$, the first-order condition is equivalent to\footnote{The excepted cases can be handled directly from (\ref{eq:foc}). If $C(\bm{\phi},\mathbf{u}_t) = 0$ and $u_{pt} \neq 0$,  there is no interior critical point in $(-\pi/2,0)$. If $u_{pt} = 0$ and $C(\bm{\phi},\mathbf{u}_t) \neq 0$, the unique interior critical point occurs at $-\pi/4$.}
       \begin{equation}
    	\tan(2\theta^*(\bm{\phi},\mathbf{u}_t)) = R(\bm{\phi},\mathbf{u}_t).
        \end{equation}
    The unique solution in $(-\pi/2,0)$ is $\theta^*(\bm{\phi},\mathbf{u}_t)$ defined in (\ref{eq:thetastar}).\footnote{Solutions to $\tan(2\theta) = R(\bm{\phi},\mathbf{u}_t)$ differ by $\pi/2$. Consequently, $IS_{\theta}(\bm{\phi}) \subseteq [-\pi/2,0]$ contains at most one interior critical point.}
    
    Evaluating $H_{t,t}(\theta,\bm{\phi},\mathbf{u}_t)$ at $\theta^*(\bm{\phi},\mathbf{u}_t)$ yields
    \begin{equation}\label{eq:hdstar}
    	H_{t,t}(\theta^*,\bm{\phi},\mathbf{u}_t) =
    	\begin{cases} \frac{u_{pt}}{2}\left(1+ \sqrt{1+ R(\bm{\phi},\mathbf{u}_t)^2}\right) & \text{if $R(\bm{\phi},\mathbf{u}_t) < 0$} \\
    		\frac{u_{pt}}{2}\left(1 - \sqrt{1+ R(\bm{\phi},\mathbf{u}_t)^2}\right) & \text{if $R(\bm{\phi},\mathbf{u}_t) > 0$.}
    	\end{cases}
    \end{equation}
    To check whether $\theta^*(\bm{\phi},\mathbf{u}_t)$ delivers a minimum or a maximum of $H_{t,t}(\theta,\bm{\phi},\mathbf{u}_t)$, consider the second-order condition:
    \begin{equation}\label{eq:soc}
    	\frac{\partial^2 H_{t,t}(\theta,\bm{\phi},\mathbf{u}_t)}{\partial \theta^2} = - 2u_{pt}\cos(2\theta) - 2C(\bm{\phi},\mathbf{u}_t) \sin(2\theta).
    \end{equation}
    Substituting (\ref{eq:thetastar}) into (\ref{eq:soc}),
    \begin{equation}
    	\frac{\partial^2 H_{t,t}(\theta,\bm{\phi},\mathbf{u}_t)}{\partial \theta^2}\bigg|_{\theta = \theta^*(\bm{\phi},\mathbf{u}_t)} =
    	\begin{cases}
    		-2u_{pt} \sqrt{1+R(\bm{\phi},\mathbf{u}_t)^2} & \text{if $R(\bm{\phi},\mathbf{u}_t) < 0$} \\
    		2u_{pt} \sqrt{1+R(\bm{\phi},\mathbf{u}_t)^2} & \text{if $R(\bm{\phi},\mathbf{u}_t) > 0$.}
    	\end{cases}
    \end{equation}
    Since $C(\bm{\phi},\mathbf{u}_t) = R(\bm{\phi},\mathbf{u}_t)u_{pt}$, it follows that
    \begin{equation}\label{eq:soc_sign}
    	\mathrm{sign}\left(\frac{\partial^2 H_{t,t}(\theta,\bm{\phi},\mathbf{u}_t)}{\partial \theta^2}\bigg|_{\theta = \theta^*(\bm{\phi},\mathbf{u}_t)}\right) = \mathrm{sign}(C(\bm{\phi},\mathbf{u}_t))=
    	\begin{cases}
    		-1 & \text{if $C(\bm{\phi},\mathbf{u}_t) < 0$} \\
    		1 & \text{if $C(\bm{\phi},\mathbf{u}_t) > 0$.}
    	\end{cases}
    \end{equation}
    Consequently, if $\theta^*(\bm{\phi},\mathbf{u}_t) \in \operatorname{int} IS_{\theta}(\bm{\phi})$, it is a maximum when $C(\bm{\phi},\mathbf{u}_t) < 0$ and a minimum when $C(\bm{\phi},\mathbf{u}_t) > 0$.

    In evaluating $H_{t,t}(\theta,\bm{\phi},\mathbf{u}_t)$ at the end points of $IS_{\theta}(\bm{\phi})$, there are two cases to consider based on the sign of $\rho$. First, when $\rho < 0$,
    \begin{equation}\label{eq:hd_bound1_rhoneg}
    	H_{t,t}\left(\underline{\theta},\bm{\phi},\mathbf{u}_t\right) = \rho^2 \frac{\sigma_{11}}{\sigma_{21}}u_{qt} 
    \end{equation}
    and
    \begin{equation}\label{eq:hd_bound2_rhoneg}
    	H_{t,t}(\overline{\theta},\bm{\phi},\mathbf{u}_t) = u_{pt}. 
    \end{equation}
    When $\rho \geq 0$,
    \begin{equation}\label{eq:hd_bound1_rhopos}
    	H_{t,t}\left(\underline{\theta},\bm{\phi},\mathbf{u}_t\right) = 0 
    \end{equation}
    and
    \begin{equation}\label{eq:hd_bound2_rhopos}
    	H_{t,t}\left(\overline{\theta},\bm{\phi},\mathbf{u}_t \right) = (1-\rho^{2})\left(u_{pt} - \frac{\sigma_{21}}{\sigma_{22}}C(\bm{\phi},\mathbf{u}_t)\right). 
    \end{equation}
    Thus, one end point always delivers an extreme contribution to the observed forecast error. When $\rho < 0$, the supply shock may completely explain the price innovation. When $\rho \geq 0$, it may explain none of it.
    
    Because there is at most one interior critical point, the minimum and maximum must be attained among the values at $\underline{\theta}(\bm{\phi})$, $\overline{\theta}(\bm{\phi})$ and the admissible critical point $\theta^{*}(\bm{\phi},\mathbf{u}_t)$. This establishes (\ref{eq:thetaset})--(\ref{eq:is_htt_bounds}). 
    
\end{proof}

The following results present closed-form expressions for the bounds in Proposition~\ref{prop:htt_analytical}. Define $H_{t,t}^{*}(\bm{\phi},\mathbf{u}_t) \equiv H_{t,t}(\theta^{*}(\bm{\phi},\mathbf{u}_t),\bm{\phi},\mathbf{u}_t)$. There are four cases to consider depending on the sign of $\rho$ and whether the critical point $\theta^{*}(\bm{\phi},\mathbf{u}_t)$ lies within $IS_{\theta}(\bm{\phi})$.

\medskip

\noindent\textbf{Negative correlation, critical point admissible.} When $\rho < 0$, the critical point is admissible if and only if $\theta^{*}(\bm{\phi},\mathbf{u}_t) \geq \arctan(\sigma_{22}/\sigma_{21})$.\footnote{The second-order condition (\ref{eq:soc_sign}) indicates whether $\theta^{*}(\bm{\phi},\mathbf{u}_t)$ corresponds to the lower bound of $H_{t,t}(\theta,\bm{\phi},\mathbf{u}_t)$ or the upper bound, while the other bound can be obtained by comparing (\ref{eq:hd_bound1_rhoneg}) and (\ref{eq:hd_bound2_rhoneg}).} In this case, 
	\begin{equation}
		IS_{H_{t,t}}(\bm{\phi},\mathbf{u}_t) = 
		\begin{cases}
			\left[H_{t,t}^*(\bm{\phi},\mathbf{u}_t), \max\left\{u_{pt},\rho^2 \frac{\sigma_{11}}{\sigma_{21}}u_{qt}\right\}\right] & \text{if $C(\bm{\phi},\mathbf{u}_t) > 0$} \\
			\left[\min\left\{u_{pt},\rho^2 \frac{\sigma_{11}}{\sigma_{21}}u_{qt}\right\},H_{t,t}^*(\bm{\phi},\mathbf{u}_t)\right] & \text{if $C(\bm{\phi},\mathbf{u}_t) < 0$.} \\
		\end{cases}
	\end{equation}

    \medskip
    
  \noindent\textbf{Negative correlation, critical point inadmissible.}
	\begin{equation}
		IS_{H_{t,t}}(\bm{\phi},\mathbf{u}_t) = 
			\left[\min\left\{u_{pt},\rho^2 \frac{\sigma_{11}}{\sigma_{21}}u_{qt}\right\}, \max\left\{u_{pt},\rho^2 \frac{\sigma_{11}}{\sigma_{21}}u_{qt}\right\}\right].
	\end{equation}

    \medskip

\noindent\textbf{Nonnegative correlation, critical point admissible.} When $\rho \geq 0$, the critical point is admissible if and only if $\theta^{*}(\bm{\phi},\mathbf{u}_t) \leq \arctan(-\sigma_{21}/\sigma_{22})$.\footnote{The second-order condition (\ref{eq:soc_sign}) indicates whether $\theta^{*}(\bm{\phi},\mathbf{u}_t)$ corresponds to the lower bound of $H_{t,t}(\theta,\bm{\phi},\mathbf{u}_t)$ or the upper bound, while the other bound can be obtained by comparing (\ref{eq:hd_bound1_rhopos}) and (\ref{eq:hd_bound2_rhopos}).} In this case,
	\begin{equation}
		IS_{H_{t,t}}(\bm{\phi},\mathbf{u}_t) = 
		\begin{cases}
			\left[H_{t,t}^*(\bm{\phi},\mathbf{u}_t),\max\left\{0, (1-\rho^{2})\left(u_{pt} - \frac{\sigma_{21}}{\sigma_{22}}C(\bm{\phi},\mathbf{u}_t)\right)\right\}\right] & \text{if $C(\bm{\phi},\mathbf{u}_t) > 0$} \\		
			\left[\min\left\{0, (1-\rho^{2})\left(u_{pt} - \frac{\sigma_{21}}{\sigma_{22}}C(\bm{\phi},\mathbf{u}_t)\right)\right\}, H_{t,t}^*(\bm{\phi},\mathbf{u}_t)\right] & \text{if $C(\bm{\phi},\mathbf{u}_t) < 0$} \\
		\end{cases}
	\end{equation}	

 \medskip

 \noindent\textbf{Nonnegative correlation, critical point inadmissible.}
 	\begin{multline}
	IS_{H_{t,t}}(\bm{\phi},\mathbf{u}_t) = 
		\Bigg[\min\left\{0, (1-\rho^{2})\left(u_{pt} - \frac{\sigma_{21}}{\sigma_{22}}C(\bm{\phi},\mathbf{u}_t)\right)\right\}, \\
        \max\left\{0, (1-\rho^{2})\left(u_{pt} - \frac{\sigma_{21}}{\sigma_{22}}C(\bm{\phi},\mathbf{u}_t)\right)\right\}\Bigg].
	\end{multline}	

The following corollary uses the analytical characterisation above to describe properties of the identified set for the historical decomposition.

\begin{corollary}
    The identified set for $H_{t,t}$, $IS_{H_{t,t}}(\bm{\phi},\mathbf{u}_t)$, has the following properties:
    \begin{enumerate}
        \item If $\rho < 0$, then
        \begin{equation}
            u_{pt} \in IS_{H_{t,t}}(\bm{\phi},\mathbf{u}_t).
        \end{equation}
        If $\rho \geq 0$, then
        \begin{equation}
            0 \in IS_{H_{t,t}}(\bm{\phi},\mathbf{u}_t).
        \end{equation}
        \item If $\rho = 0$, then
        \begin{equation}\label{eq:rho0}
            [\min\{0,u_{pt}\},\max\{0,u_{pt}\}] \subseteq IS_{H_{t,t}}(\bm{\phi},\mathbf{u}_t).
        \end{equation}
        \item Along a sequence of reduced-form parameters such that $\rho \rightarrow -1$ and $\sigma_{11}/\sigma_{21} \rightarrow \kappa < \infty$,
        \begin{equation}
            IS_{H_{t,t}}(\bm{\phi},\mathbf{u}_t) \rightarrow   
                \left[\min\left\{u_{pt}, \kappa u_{qt}\right\}, \max\left\{u_{pt}, \kappa u_{qt}\right\}\right],
        \end{equation}
        in the sense that the lower and upper end points converge to the end points of this interval. Its limiting width is
        \begin{equation}\label{eq:hist_deviation}
            \left|u_{pt}- \kappa u_{qt}\right|.
        \end{equation}
    \end{enumerate}
\end{corollary}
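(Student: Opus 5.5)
The plan is to read all three properties off Proposition~\ref{prop:htt_analytical} and the endpoint evaluations (\ref{eq:hd_bound1_rhoneg})--(\ref{eq:hd_bound2_rhopos}) in its proof, using the fact that $IS_{H_{t,t}}$ is the image of the interval $IS_{\theta}(\bm{\phi})$ under a continuous map. That image is therefore an interval containing $H_{t,t}$ evaluated at each admissible $\theta$.

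For property~1, I would note that when $\rho<0$ we have $\overline{\theta}=0\in IS_\theta(\bm{\phi})$, and (\ref{eq:hd_bound2_rhoneg}) gives $H_{t,t}(0)=u_{pt}$. When $\rho\ge 0$ we have $\underline{\theta}=-\pi/2\in IS_\theta(\bm{\phi})$, and (\ref{eq:hd_bound1_rhopos}) gives $H_{t,t}(-\pi/2)=0$. This needs no restriction on $u_{pt}$ or $C$, since it only evaluates (\ref{eq:hd_bivariate}) directly.

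For property~2, set $\rho=0$, so $\sigma_{21}=0$ and $IS_\theta(\bm{\phi})=[-\pi/2,\arctan 0]=[-\pi/2,0]$. Both $\theta=-\pi/2$ and $\theta=0$ are then admissible, giving the values $0$ and $u_{pt}$. Because the image of a connected set under a continuous function is connected, the whole interval between $0$ and $u_{pt}$ lies in $IS_{H_{t,t}}$, which is (\ref{eq:rho0}).

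Property~3 is the main step. For $\rho<0$, I would first show that $\underline{\theta}=\arctan(\sigma_{22}/\sigma_{21})\to 0$ as $\rho\to-1$. This holds because $\sigma_{22}/\sigma_{21}=-\sqrt{1-\rho^2}/|\rho|\to 0$, so $IS_\theta(\bm{\phi})=[\underline{\theta},0]$ collapses to $\{0\}$. Rather than tracking the interior critical point $\theta^*$, I would bound the variation of $H_{t,t}$ on this shrinking interval. The derivative (\ref{eq:foc}) is bounded by $|u_{pt}|+|C|$, but $C=(\sigma_{11}u_{qt}-\sigma_{21}u_{pt})/\sigma_{22}$ blows up because $\sigma_{22}\to 0$, so this needs care. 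I would write $H_{t,t}(\theta)=\cos^2\theta\,(u_{pt}+C\tan\theta)$. On $[\underline{\theta},0]$, $\tan\theta$ lies between $\sigma_{22}/\sigma_{21}$ and $0$, so $C\tan\theta=s\,C\sigma_{22}/\sigma_{21}$ for some $s\in[0,1]$. Here $C\sigma_{22}/\sigma_{21}=(\sigma_{11}/\sigma_{21})u_{qt}-u_{pt}\to\kappa u_{qt}-u_{pt}$. Hence $H_{t,t}(\theta)=\cos^2\theta\,\big((1-s)u_{pt}+s\,\tfrac{\sigma_{11}}{\sigma_{21}}u_{qt}\big)$, and since $\cos^2\theta\to 1$ uniformly on the interval, $H_{t,t}$ ranges over an interval converging to the segment between $u_{pt}$ and $\kappa u_{qt}$. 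Monotonicity of $\tan$ makes $s$ sweep all of $[0,1]$, so the limit is the full segment and not a subset. This also agrees with the endpoint value $\rho^2(\sigma_{11}/\sigma_{21})u_{qt}\to\kappa u_{qt}$ from (\ref{eq:hd_bound1_rhoneg}). The limiting width $|u_{pt}-\kappa u_{qt}|$ follows immediately.

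The obstacle is exactly this uniform control: $C$ diverges while the interval shrinks. The reparametrisation via $\tan\theta$ handles it, and it makes clear that the bounds in Proposition~\ref{prop:htt_analytical} converge without a separate analysis of whether $\theta^*$ is admissible.
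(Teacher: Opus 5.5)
Your proof is correct. Properties~1 and~2 match the paper: you evaluate $H_{t,t}$ at the admissible end points $\theta=0$ and $\theta=-\pi/2$, and then use that the image of the interval $IS_{\theta}(\bm{\phi})$ is itself an interval.

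For property~3 you take a more careful route than the paper. The paper only observes two things: both end points of $IS_{\theta}(\bm{\phi})$ converge to zero, and the end-point values (\ref{eq:hd_bound1_rhoneg})--(\ref{eq:hd_bound2_rhoneg}) converge to $\kappa u_{qt}$ and $u_{pt}$. It then treats these as the end points of $IS_{H_{t,t}}$ without explicitly addressing a possibly admissible interior critical point $\theta^{*}$. That omission matters, because the derivative (\ref{eq:foc}) involves $C(\bm{\phi},\mathbf{u}_t)$, which diverges, so continuity alone does not give uniform control on the shrinking interval.

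Your substitution $\tan\theta = s\,\sigma_{22}/\sigma_{21}$, $s\in[0,1]$, writes $H_{t,t}$ on the whole of $IS_{\theta}(\bm{\phi})$ as $\cos^2\theta$ times a convex combination of $u_{pt}$ and $(\sigma_{11}/\sigma_{21})u_{qt}$. Since $1+(\sigma_{22}/\sigma_{21})^2=1/\rho^2$, you have exactly $\cos^2\theta\in[\rho^2,1]$. The entire image is therefore squeezed uniformly onto the limiting segment. This closes the gap around $\theta^{*}$ and removes the need for any case analysis of its admissibility, at the cost of a few extra lines. The paper's version gains brevity by leaning on the closed-form end-point values from Proposition~\ref{prop:htt_analytical}.

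One small imprecision in your motivating remark: $\rho\to-1$ forces $\sigma_{22}/\sigma_{21}\to 0$, not necessarily $\sigma_{22}\to 0$. $|C|$ diverges whenever $u_{pt}\neq\kappa u_{qt}$ because $C=\big((\sigma_{11}/\sigma_{21})u_{qt}-u_{pt}\big)\,\sigma_{21}/\sigma_{22}$. Your actual argument never relies on that claim, so the proof is unaffected.
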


\begin{proof}
    The first property follows from the end points in (\ref{eq:hd_bound2_rhoneg}) and (\ref{eq:hd_bound1_rhopos}). The second property follows from the end points in (\ref{eq:hd_bound1_rhopos}) and (\ref{eq:hd_bound2_rhopos}) evaluated at $\rho = 0$. For the third property, as $\rho \rightarrow -1$, $\sigma_{22}/\sigma_{21} \rightarrow 0$, so $\underline{\theta}(\bm{\phi})$ and $\overline{\theta}(\bm{\phi})$ both converge to zero. Under the assumption that $\sigma_{11}/\sigma_{21} \rightarrow \kappa$, this implies that the corresponding end points of  $IS_{H_{t,t}}(\bm{\phi},\mathbf{u}_t)$ converge to $u_{pt}$ and $\kappa u_{qt}$. 
\end{proof}

\medskip
\noindent\textbf{Remarks.} The first property can be interpreted as implying that the identified set always admits an `extreme' contribution. When $\rho < 0$, $u_{pt} \in IS_{H_{t,t}}(\bm{\phi},\mathbf{u}_t)$ at all values of the reduced-form parameters and at any realisation of the data; that is, the identified set always admits the possibility that $u_{pt}$ is entirely driven by the supply shock. Similarly, when $\rho \geq 0$, $0 \in IS_{H_{t,t}}(\bm{\phi},\mathbf{u}_t)$ and the identified set always admits the possibility that $u_{pt}$ is entirely driven by the demand shock, with the supply shock making no contribution.

The second property can be interpreted as implying that the sign restrictions are uninformative when the innovations are uncorrelated. By continuity, when $|\rho| \approx 0$, identified sets will also tend to admit contributions close to both zero and $u_{pt}$. When the price-quantity innovations are weakly correlated, the sign restrictions will therefore remain unable to rule out nearly pure demand- or supply-driven explanations of the price forecast error.\footnote{The identified set may extend beyond the interval between zero and $u_{pt}$. This can be shown directly using the analytical characterisation for $IS_{H_{t,t}}(\bm{\phi},\mathbf{u}_t)$. When $\rho = 0$, $C(\bm{\phi},\mathbf{u}_t) = (\sigma_{11}/\sigma_{22})u_{qt}$ and $R(\bm{\phi},\mathbf{u}_t) = (\sigma_{11}/\sigma_{22})(u_{qt}/u_{pt})$. Consider a case where $u_{pt}>0$ and $u_{qt} > 0$, so $IS_{H_{t,t}}(\bm{\phi},\mathbf{u}_t) = [H_{t,t}^*(\bm{\phi},\mathbf{u}_t),u_{pt}]$, where $H_{t,t}^*(\bm{\phi},\mathbf{u}_t) = \frac{u_{pt}}{2}\left(1 - \sqrt{1+ R(\bm{\phi},\mathbf{u}_t)^2}\right)$. Since $H_{t,t}^*(\bm{\phi},\mathbf{u}_t) < 0$, $[0,u_{pt}] \subset IS_{H_{t,t}}(\bm{\phi},\mathbf{u}_t)$.}

The third property implies that strong correlation between the innovations (i.e. $|\rho|\approx 1$) can make the sign restrictions highly informative, but only when $\mathbf{u}_t$ aligns with the historical price-quantity relationship. From (\ref{eq:hist_deviation}), the limiting width of the identified set is $|u_{pt} - \kappa u_{qt}|$, which measures the extent to which the realised forecast error departs from the historical (limiting) negative relationship between the innovations. As $\rho \rightarrow -1$, the population linear projection of $u_{pt}$ onto $u_{qt}$ converges to $\kappa u_{qt}$, which justifies interpreting (\ref{eq:hist_deviation}) as the deviation of $u_{pt}$ from its historical relationship with $u_{qt}$. If $u_{pt} = \kappa u_{qt}$, then $IS_{H_{t,t}}(\bm{\phi},\mathbf{u}_t)$ collapses to a point. Hence, the identified set will be highly informative. Conversely, strong correlation is not sufficient for the sign restrictions to deliver an informative identified set for the historical decomposition. For example, if $u_{pt} >0$ and $u_{qt} > 0$, the interval $[0,u_{pt}]$ will lie within $IS_{H_{t,t}}(\bm{\phi},\mathbf{u}_t)$. In that case, the identified set will admit the possibility that $u_{pt}$ was entirely driven by a supply shock or entirely by a demand shock, even though $\rho \approx -1$.

A similar argument applies as $\rho \rightarrow 1$. In that case, both end points of $IS_{\theta}(\bm{\phi})$ converge to $-\pi/2$. Consequently, the identified set can become narrow around zero, although its width again depends on whether the realised price and quantity innovations conform to their historical positive relationship. 

These results establish that $\rho$ determines the \emph{potential} identifying power of the sign restrictions, while the realised forecast error determines whether that potential is realised for a particular historical decomposition.

\renewcommand{\theequation}{C\arabic{equation}}
\renewcommand{\thesection}{C}
\renewcommand{\thefigure}{C\arabic{figure}}
\setcounter{equation}{0}
\setcounter{figure}{0}
\section{Beyond the Bivariate Model: An Example}
\label{app:beyondbivariate}

This appendix examines the identifying power of the supply-demand sign restrictions when adding an additional variable to the SVAR. Unlike the bivariate case, the identified set for $\mathbf{Q}$ cannot be indexed by a scalar set-identified parameter and the optimisation problem defining the bounds of the identified set for the historical decomposition does not admit a tractable analytical characterisation. Hence, I illustrate the identification problem numerically.

Consider a three-variable SVAR with $\mathbf{y}_t = (p_t,q_t,z_t)'$ and $\bm{\varepsilon}_t = (\varepsilon_{st},\varepsilon_{dt},\varepsilon_{zt})'$. Impose the following sign restrictions on $\mathbf{A}_0^{-1}$ and $\mathbf{A}_0$:
\begin{equation}
	\mathbf{A}_0^{-1} =
	\begin{bmatrix}
		+ & + & ? \\
		- & + & ? \\
		? & ? & ?
	\end{bmatrix}, \quad \mathbf{A}_0 =
	\begin{bmatrix}
		+ & - & ? \\
		+ & + & ? \\
		? & ? & ?
	\end{bmatrix}
	,
\end{equation}
where $?$ indicates that the element is unrestricted. These restrictions are consistent with the restrictions imposed in the bivariate model. Imposing restrictions on both $\mathbf{A}_0^{-1}$ and $\mathbf{A}_0$ is necessary here, because -- unlike in the bivariate model -- imposing the sign restrictions on $\mathbf{A}_0^{-1}$ only does not guarantee that the first two structural equations represent supply and demand curves.

Let $\rho_{ij} = \mathrm{Corr}(u_{it},u_{jt})$ be the correlation between the one-step-ahead forecast errors in variables $i$ and $j$. I specify a data-generating process such that $\rho_{12} \approx -1$, $\rho_{13}$ is small and positive and $\rho_{23}$ is small and negative. Suppose we observe a sample of 100~one-step-ahead forecast errors from this process, illustrated in Figure~\ref{fig:trivariate_scatter}. The strongly negatively correlated forecast errors in $p_t$ and $q_t$ are consistent with the bivariate example in the main text. In the bivariate model, when the reduced-form innovations are strongly negatively correlated, the identified sets for $H_{t,t}$ will tend to be narrow and unambiguously imply that unexpected changes in $p_t$ were predominantly driven by supply shocks. 

\begin{figure}[h]
	\centering
	\caption{One-step-ahead Forecast Errors in Three-variable Model} \label{fig:trivariate_scatter}
	\includegraphics[scale=0.6]{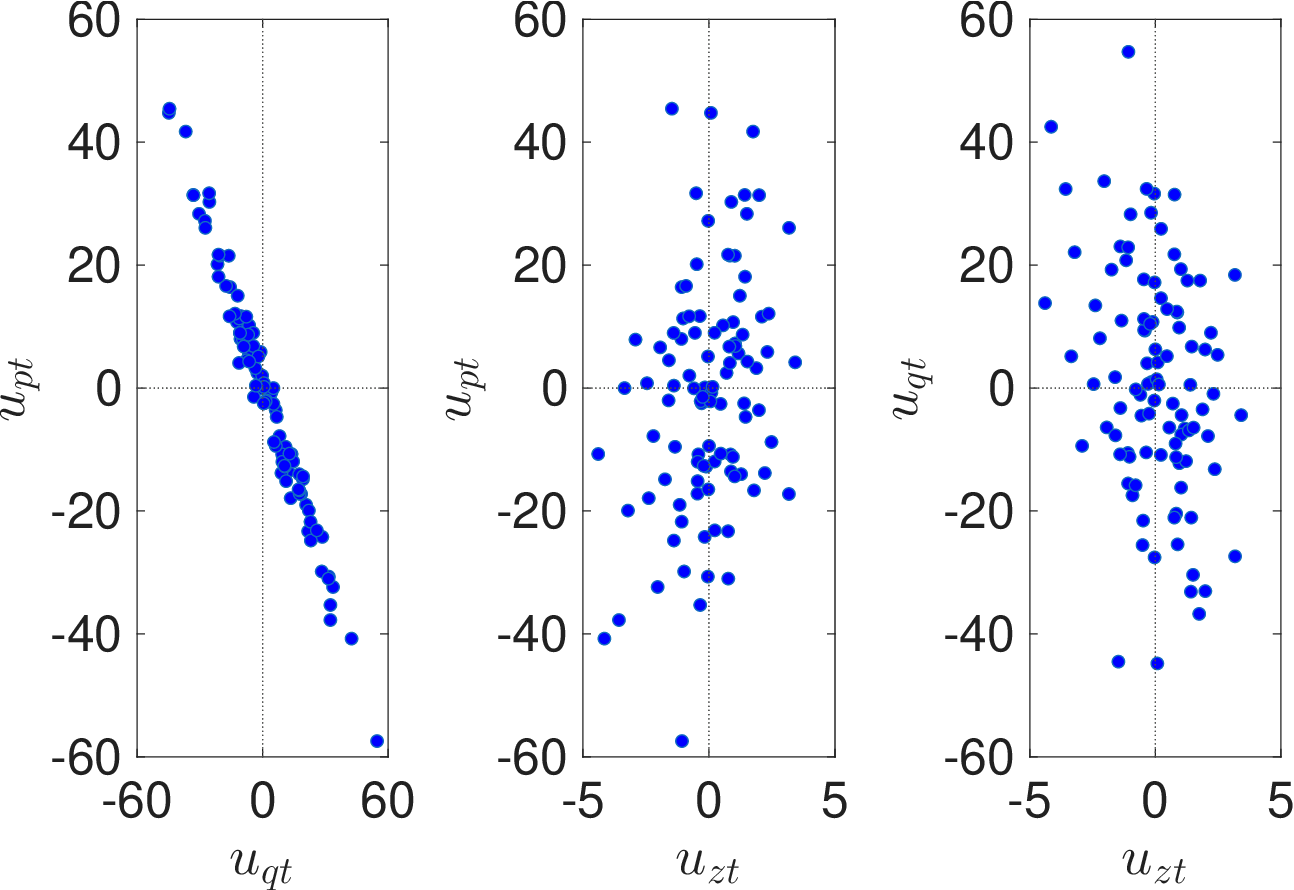}
    \noindent\begin{minipage}{\textwidth} 
	\small\textbf{Notes:} Realisations of $\mathbf{u}_t$ generated from three-variable VAR.
    \end{minipage}
\end{figure}

Figure~\ref{fig:trivariate_is} plots the 100 realisations of $u_{pt}$ along with the identified sets for $H_{t,t}$.\footnote{The identified sets are approximated by computing the minimum and maximum of $H_{t,t}$ over uniform draws of $\mathbf{Q}$ from within its identified set. The draws of $\mathbf{Q}$ are obtained using the algorithm from \cite{Rubio-Ramirez_Waggoner_Zha_2010}. Based on the results in \cite{Montiel-Olea_Nesbit_2021}, I use around $55,000$ draws to approximate the identified sets, which guarantees misclassification error of at most 1~per cent with probability at least 99~per cent.} In stark contrast with the bivariate case, these identified sets always include zero and the realisation of $u_{pt}$, implying that the sign restrictions are uninformative about the contribution of the supply shock. Intuitively, this lack of informativeness occurs because we are introducing an additional (unidentified) shock, which -- in the language of \cite{Wolf_2020} -- is able to `masquerade' as the supply shock. Adding variables to the system without imposing additional restrictions is therefore unlikely to aid in identifying historical decompositions.

\begin{figure}[h]
	\centering
	\caption{Identified Sets for Historical Decompositions in Three-variable Model} \label{fig:trivariate_is}
	\includegraphics[scale=0.6]{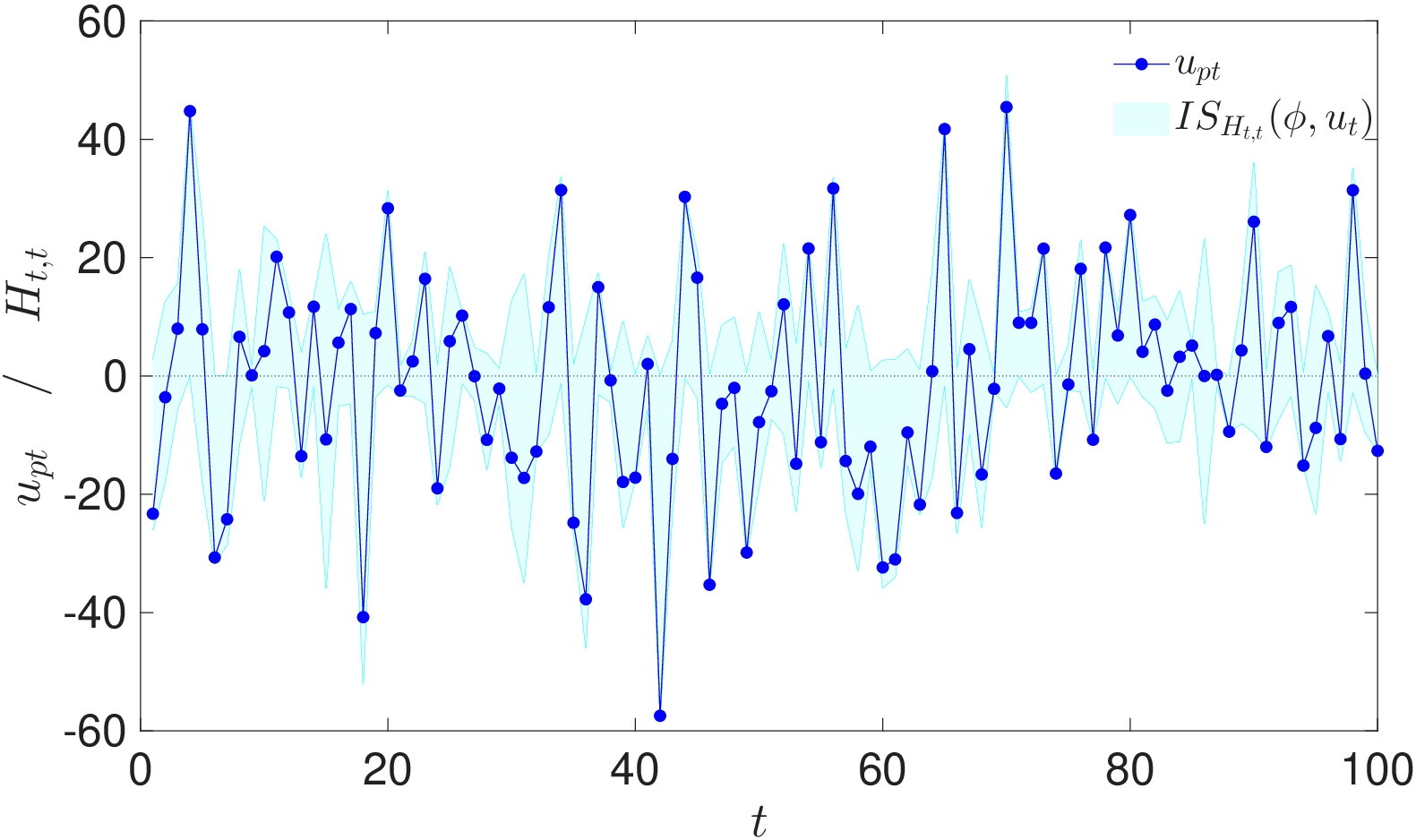}
    \noindent\begin{minipage}{\textwidth} 
	\small\textbf{Notes:} Innovations in $p_t$ simulated from data-generating process with three variables. Shaded region represents identified set for contribution of supply shock at each $t$, $H_{t,t}$, given sign restrictions.
    \end{minipage}
\end{figure}

\setcounter{section}{0}
\renewcommand{\thesection}{D}
\section{Additional Empirical Details}
\label{app:additionalresults}
\renewcommand{\theequation}{D.\arabic{equation}}
\renewcommand{\thefigure}{D\arabic{figure}}
\setcounter{equation}{0}
\setcounter{figure}{0}

This appendix describes estimation of historical decompositions in the empirical applications. Appendix~\ref{app:aggregate_imp} describes estimation under the uniform prior used by \cite{Giannone_Primiceri_2024_nber} (henceforth, GP24) and the prior-robust Bayesian procedure used to estimate identified sets and quantify reduced-form parameter uncertainty. Appendix~\ref{app:euro} presents results from decomposing euro area inflation. Appendix~\ref{app:larger_svars} presents results based on the three-variable SVARs considered in GP24. Appendix~\ref{app:disaggregated_det} provides additional details about the decompositions of disaggregated PCE data.

\subsection{Details about aggregate decompositions}
\label{app:aggregate_imp}

I estimate exactly the same reduced-form VAR as in GP24. The model includes four lags of the endogenous variables. Estimation is carried out via Bayesian methods, with a Minnesota and sum-of-coefficients prior, following \cite{Giannone_Lenza_Primiceri_2015}. As emphasised in GP24, the sum-of-coefficients prior reduces posterior uncertainty about the model's deterministic component, which can be an important source of uncertainty about historical decompositions (\citealt{Bergholt_etal_2024}). GP24 additionally assume a uniform prior for $\mathbf{Q}$ (equivalently, $\theta$) and report posterior means for the historical decompositions. When replicating their results, I obtain 10,000 draws of the reduced-form parameters $\bm{\phi}$ and 10~draws of $\theta$ from the uniform distribution over $IS_{\theta}(\bm{\phi})$ at each draw of $\bm{\phi}$, yielding 100,000 draws of $(\bm{\phi}',\theta)$. 

When estimating identified sets, I quantify posterior uncertainty around the estimates using the prior-robust Bayesian approach to inference from \cite{Giacomini_Kitagawa_2021}. This approach involves replacing the conditional prior for $\theta$ with the class of all conditional priors that are consistent with the identifying restrictions (i.e. that assign probability one to $IS_{\theta}(\bm{\phi})$). This generates a class of posteriors for the parameters of interest, which can be summarised using different outputs. The `set of posterior means' for a scalar object of interest (e.g. the historical decomposition in a given period) is an interval spanning the posterior means corresponding to the class of posteriors; this can be interpreted as an estimator of the identified set. A `robust credible interval' with credibility $1-\tau$ is an interval that is assigned at least posterior probability $1-\tau$ under any posterior in the class; these intervals naturally represent posterior uncertainty about the identified sets. 

In practice, approximating the set of posterior means and robust credible intervals requires computing identified sets at every draw of $\bm{\phi}$ from its posterior. In the current setting, this is computationally simple given the numerical procedure for computing the bounds of the identified sets for the decompositions described in Online Appendix~\ref{sec:computation}. For a given scalar parameter of interest $\eta$, the end points of the set of posterior means are approximated by the posterior means of the end points of the identified set for $\eta$. The $1-\tau$ robust credible interval is approximated by an interval with lower (upper) end point equal to the $\tau/2$ ($1-\tau/2$) posterior quantile of the lower (upper) end point of the identified set for $\eta$.

\subsubsection{Joint inference}
\label{app:jointinference}

The analysis in the main text focuses on scalar historical decompositions as the structural objects of interest, with inference conducted marginally for each shock contribution and period. In some applications, researchers may be interested in hypotheses that involve multiple shock contributions (e.g. \citealt{Inoue_Kilian_2022}); for example, relating to the contributions of different shocks in the same period or the contributions of the same shock in different periods. Examining such hypotheses requires conducting joint inference on vector-valued objects. However, the associated joint identified sets may be difficult to visualise and summarise. As discussed in \citet{Giacomini_Kitagawa_Read_2022_rejoinder}, the robust Bayes approach can be used to assess the evidence in support of hypotheses that involve multiple parameters using \emph{posterior lower and upper probabilities}; these are, respectively, the smallest and largest posterior probabilities assigned to a specified hypothesis attainable within the class of posteriors. 

I make use of these ideas to quantify the evidence for the hypothesis that demand shocks were the dominant driver of the US inflation surge. More specifically, I consider the hypothesis that, in every quarter from 2021:Q2 to 2022:Q2, the contribution of the demand shock to the year-ended CPI inflation forecast error (given information up to 2019:Q4) exceeded the contribution of the supply shock. Since the contributions of supply and demand shocks sum to the forecast error, this hypothesis can be formulated as $\mathcal{H} = \bigcup_{t \in \mathcal{T}} \{\tilde{H}_{t} \leq (1/2)FE_t\}$, where $\tilde{H}_{t}$ is the contribution of the supply shock to the year-ended inflation forecast error in period $t$ given information up to 2019:Q4 and $FE_t$ is the forecast error.\footnote{To approximate the posterior (lower and upper) probabilities, I obtain 10,000 draws of $\theta$ from the uniform distribution over $IS_{\theta}(\bm{\phi})$ at every draw of $\bm{\phi}$; see \cite{Giacomini_Kitagawa_Read_2022_rejoinder} for details about how the posterior lower and upper probabilities are approximated.}

Under the uniform prior, the posterior probability assigned to $\mathcal{H}$ is around 0.5, indicating there is a roughly even chance that the inflationary impact of demand shocks outweighed that of supply shocks in every period during the inflation surge. However, the posterior lower probability is zero and the posterior upper probability is 0.99. The sign restrictions are therefore uninformative about the hypothesis. This exercise indicates that the sign restrictions remain uninformative about the drivers of the inflation surge when considering joint, rather than marginal, inference.

\subsection{Euro area results}
\label{app:euro}

GP24 decompose euro area consumer price inflation using the same model that is used to decompose US inflation (described in Section~\ref{subsec:agg}). Consumer price inflation is based on the Harmonized Index of Consumer Prices (HICP). Figure~\ref{fig:GP24_EA} plots the estimated supply contributions under both the uniform prior and the robust Bayesian approach. The posterior mean under the standard Bayesian approach to inference implies that, at its peak in 2022:Q4, the forecast error in year-ended inflation was largely attributable to demand shocks; supply shocks contributed only 2.4~percentage points of the 7.9~percentage point forecast error. However, the set of posterior means for this supply contribution ranges from around zero to 6.8~percentage points, so the sign restrictions are again largely uninformative about the magnitude of the supply and demand contributions.

\begin{figure}[!htp]
	\centering
	\caption{Contribution of Supply Shocks to Year-ended Euro Area HICP Inflation}\label{fig:GP24_EA}
         \includegraphics[scale=0.6]{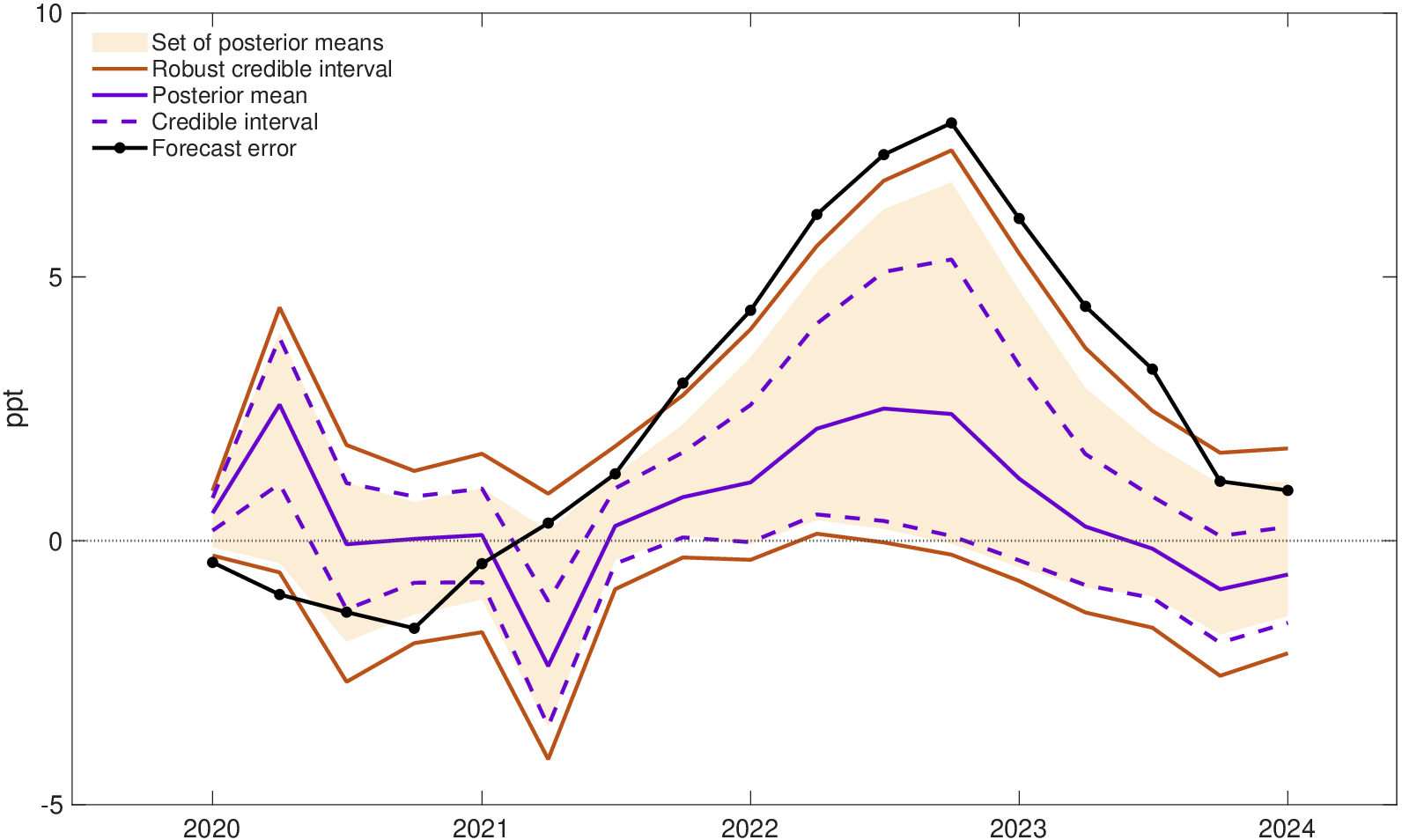}
    \noindent\begin{minipage}{\textwidth}
	\small\textbf{Notes:} Contribution of supply shocks to forecast error in year-ended HICP inflation given information available up to 2019:Q4. Set of posterior means is interpretable as estimator of identified set. Standard and robust credible intervals are at the 68~per cent credibility level.
     \end{minipage}  
\end{figure}

\subsection{Larger SVARs}
\label{app:larger_svars}

In addition to decompositions based on bivariate SVARs, GP24 estimate two distinct decompositions based on three-variable SVARs. This section estimates identified sets for these decompositions. Since these are three-variable SVARs, the numerical approach to computing the end points of the identified set used above is no longer applicable. I therefore approximate the end points of identified sets using 10,000~draws of $\mathbf{Q}$ from a uniform distribution over its identified set obtained via accept-reject sampling (\citealt{Rubio-Ramirez_Waggoner_Zha_2010}).\footnote{The results in \cite{Montiel-Olea_Nesbit_2021} suggest that using around 10,000 draws of $\mathbf{Q}$ to approximate the end points of the identified sets is sufficient to ensure an accurate approximation. According to their Theorem~3, the number of draws required from inside the identified set to guarantee misclassification error less than $\epsilon$ occurs with probability at least $1-\delta$ is $\mathrm{min}\{2d\ln(2d/\delta),\exp(1)(2d+\ln(1/\delta))\}/\epsilon$, where $d$ is the dimension of the parameter region being approximated (i.e. the number of historical decompositions). In the current exercise, setting $d = 17$ (the number of post-2019 quarters) and $\epsilon = \delta = 0.01$ yields $K=10,494$.}

\begin{figure}[htp]
	\centering
	\caption{Contributions of Shocks to Year-ended Non-energy CPI Inflation}\label{fig:GP24_Energy_US}
         \includegraphics[scale=0.7]{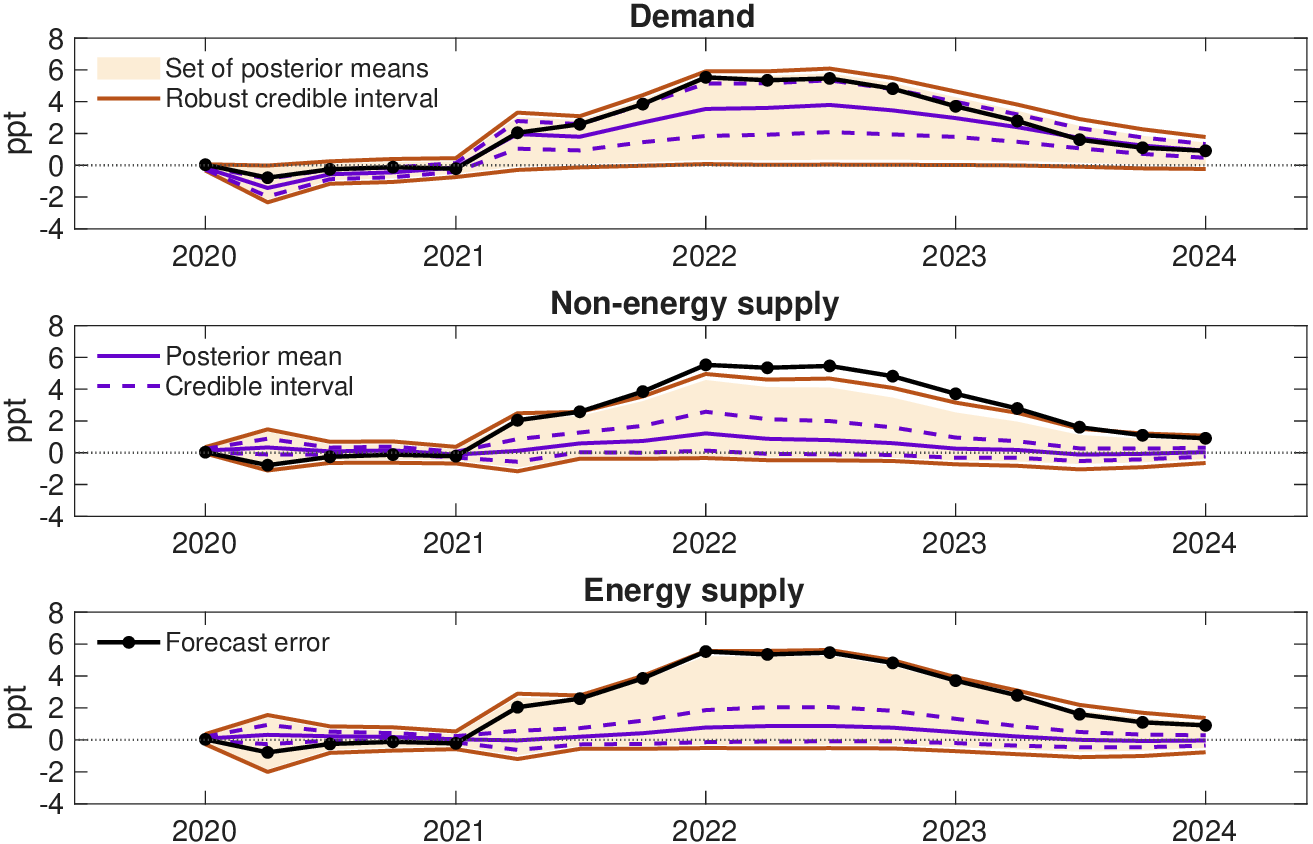}
    \noindent\begin{minipage}{\textwidth}
	\small\textbf{Notes:} Contributions of shocks to forecast error in year-ended non-energy CPI inflation given information available up to 2019:Q4. Set of posterior means is interpretable as estimator of identified set. Standard and robust credible intervals are at the 68~per cent credibility level.
     \end{minipage}  
\end{figure}

The first three-variable SVAR replaces inflation in the baseline model with two series measuring energy and non-energy prices, and identifies three separate shocks: (i) demand shocks, which are assumed to move GDP, energy and non-energy prices in the same direction; (ii) non-energy supply shocks, which are assumed to move non-energy prices in the opposite direction to GDP and energy prices; and (iii) energy supply shocks, which are assumed to move energy prices in the opposite direction to GDP (the response of non-energy prices is left unrestricted). 

Figure~\ref{fig:GP24_Energy_US} plots the contributions of each of the three shocks to the post-2019 forecast error in year-ended US non-energy inflation. The results based on the standard Bayesian approach to inference with a uniform prior are consistent with those in GP24; in particular, the contribution of energy supply shocks is substantially smaller than that of demand shocks. However, in all periods, the set of posterior means for the contribution of the demand shock spans values ranging from around zero to the realisation of the forecast error. The sign restrictions are therefore largely uninformative about the extent to which (non-energy) inflation in the US was driven by demand or supply shocks in this period.

The second three-variable SVAR that GP24 consider augments their baseline model with a measure of interest rates to capture the monetary policy stance. They identify three shocks: (i) demand shocks, which are assumed to move GDP, prices and the nominal interest rate in the same direction; (ii) supply shocks, which are assumed to move GDP and prices in opposite directions (with the interest rate response unrestricted); and (iii) monetary policy shocks, which are assumed to move the nominal interest rate in the opposite direction to GDP and prices. The sign of the interest-rate response to a monetary policy shock is constrained for four periods rather than only on impact.

\begin{figure}[htp]
	\centering
	\caption{Contributions of Shocks to Year-ended US CPI Inflation -- Monetary Model}\label{fig:GP24_MP_US}
         \includegraphics[scale=0.7]{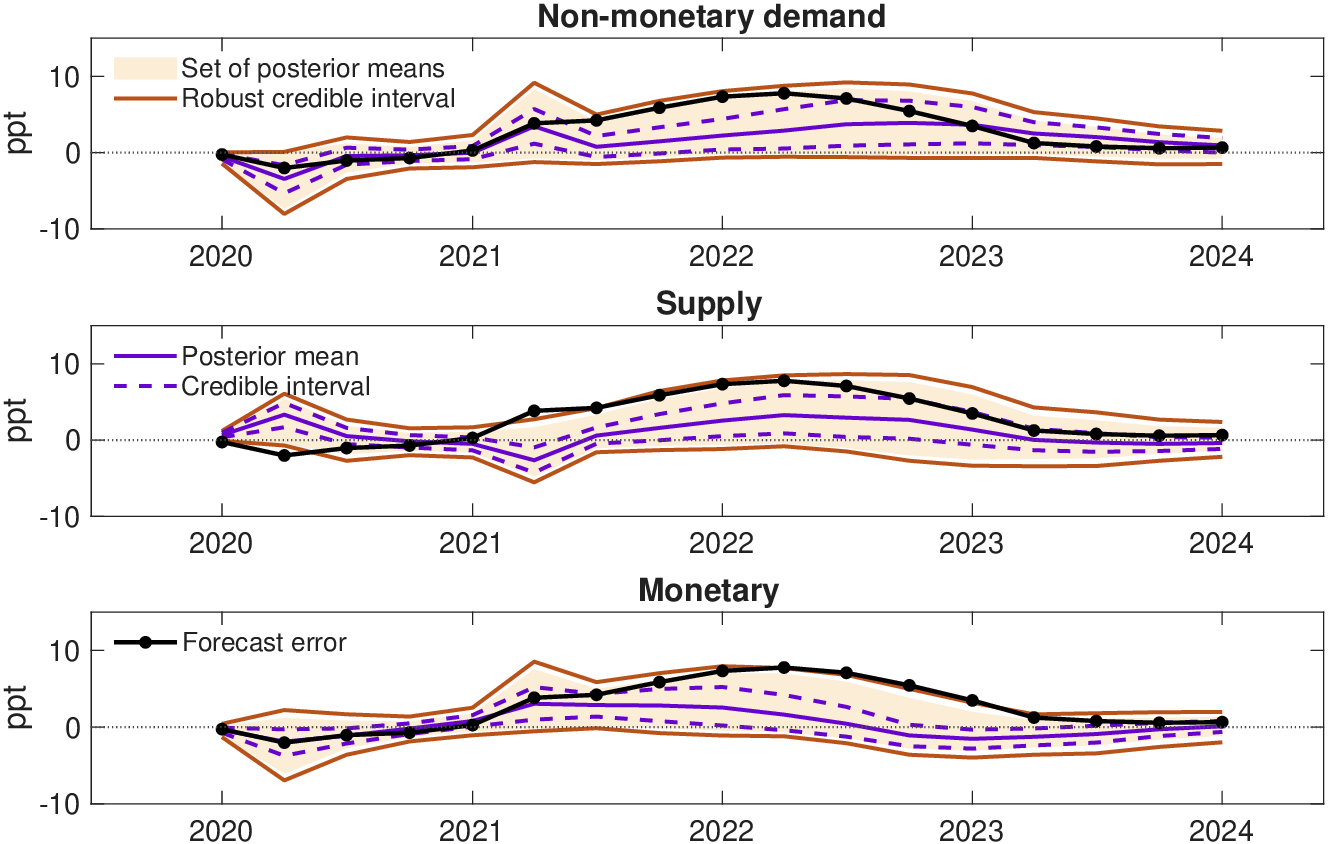}
    \noindent\begin{minipage}{\textwidth}
	\small\textbf{Notes:} Contributions of shocks to forecast error in year-ended CPI inflation given information available up to 2019:Q4. Set of posterior means is interpretable as estimator of identified set. Standard and robust credible intervals are at the 68~per cent credibility level.
     \end{minipage}  
\end{figure}

Figure~\ref{fig:GP24_MP_US} plots the contributions of each of the three shocks to the post-2019 forecast error in year-ended US CPI inflation. Again, the set of posterior means and robust credible intervals indicate that the sign restrictions are largely unable to distinguish between demand- and supply-driven explanations for the large run-up and subsequent decline in US inflation.

\subsection{Details about disaggregated decompositions}
\label{app:disaggregated_det}

The specifications of the VARs for each expenditure category follow the specifications in \cite{Shapiro_2026}, with two main exceptions.\footnote{The data are from the Bureau of Economic Analysis National Accounts Underlying Detail Tables.} First, the baseline specification in \cite{Shapiro_2026} is in levels, whereas mine is in log first differences. I work with the first-difference specification because the VARs in log levels are often explosive, in which case the historical decompositions of interest are also explosive.\footnote{In the first-difference specification, the VAR is explosive in two categories.} Second, \cite{Shapiro_2026} estimates the VARs using rolling OLS regressions, whereas I estimate constant-parameter VARs for simplicity. 

The VAR for each category is specified in terms of monthly PCE inflation (approximated by log differences). In the terminology of Section~\ref{subsec:historicaldecomp}, the concept of the historical decomposition that I consider is $H_{1,t}$. This represents the contribution of \emph{all} supply shocks occurring in the sample period to the realisation of monthly inflation in period $t$. I present the contribution of supply shocks to the realisation of \emph{year-ended} inflation. To obtain identified sets for these historical decompositions, I follow the strategy described in Online Appendix~\ref{sec:computation}. In particular, given that monthly inflation is approximated by log differences, year-ended inflation is equal to the rolling twelve-monthly sum of monthly inflation: $\pi_t^{(k,ye)} = \sum_{l=0}^{11}\pi_{t-l}^{(k)}$. Given that $H_{1,t}$ is the contribution of supply shocks to monthly inflation in period $t$, the contribution of supply shocks to year-ended inflation is then:
\begin{equation}\label{eq:hd_ye_dis}
    \sum_{h=0}^{11}H_{1,t-h}(\theta,\bm{\phi},\left\{\mathbf{u}_{l}\right\}_{l=1}^{t-h}) =  \begin{bmatrix}
        \cos\theta & \sin\theta
    \end{bmatrix}
    \tilde{\bm{\Omega}}_{t-1}(\bm{\phi}, \left\{\mathbf{u}_{l}\right\}_{l=1}^{t})
    \begin{bmatrix}
        \cos\theta \\
        \sin\theta
    \end{bmatrix},
\end{equation}
where
\begin{equation}
    \tilde{\bm{\Omega}}_{t-1}(\bm{\phi}, \left\{\mathbf{u}_{l}\right\}_{l=1}^{t}) = \sum_{h=0}^{11}\bm{\Omega}_{t-h}(\bm{\phi}, \left\{\mathbf{u}_{l}\right\}_{l=1}^{t-h}).
\end{equation}
The objective in (\ref{eq:hd_ye_dis}) has the same structure as (\ref{eq:hdtheta}). Hence, the numerical procedure from Online Appendix~\ref{sec:computation} can be used to compute the end points of the identified set, replacing $\bm{\Omega}_{h}(\bm{\phi}, \left\{\mathbf{u}_{l}\right\}_{l=t}^{t+h})$ with $\tilde{\bm{\Omega}}_{t-1}(\bm{\phi}, \left\{\mathbf{u}_{l}\right\}_{l=1}^{t})$.

\putbib
\end{bibunit}

\end{document}